\crefname{subsection}{Subsection}{Subsections}
\newtheorem{theorem}{Theorem}[section]
\newtheorem{definition}[theorem]{Definition}
\newtheorem{lemma}[theorem]{Lemma}
\newtheorem{claim}[theorem]{Claim}
\newtheorem{problem}{Open Question}
\newtheorem{hypothesis}[theorem]{Hypothesis}
\newenvironment{claimproof}[1]{\par\noindent\emph{Proof.}\space#1}{\hfill $\blacksquare$\\}
\newenvironment{insight}
{\mdfsetup{%
    nobreak=true,
	middlelinecolor=gray,
	middlelinewidth=1pt,
	backgroundcolor=gray!10,
    innertopmargin=5pt,
	roundcorner=5pt}
\begin{mdframed}}
{\end{mdframed}}
\newcommand{\floor}[1]{\left\lfloor #1 \right\rfloor}
\newcommand{\ceil}[1]{\left\lceil #1 \right\rceil}
\newcommand{\eps}{\varepsilon}
\newcommand{\Oh}{\mathcal{O}}
\newcommand{\Os}{\mathcal{O}^{\star}}
\newcommand{\Tt}{\mathcal{T}}
\newcommand{\nat}{\mathbb{N}}
\newcommand{\sched}{$P\, \vert \, \mathrm{prec}, p_j =1 \vert \, C_{\max}$\,}
\newcommand{\schedm}{$Pm \, \vert \, \mathrm{prec}, p_j =1 \vert \, C_{\max}$\,}
\newcommand{\schedthree}{$P3 \, \vert \, \mathrm{prec}, p_j =1 \vert \, C_{\max}$\,}
\newcommand{\makespan}{\ensuremath{M}}
\newcommand{\DP}{\mathsf{DP}}
\newcommand{\DKS}{D$\kappa$S\xspace}
\newcommand{\den}{\mathrm{den}_{\kappa}}
\newcommand{\Int}[1]{\mathrm{Int}(#1]}
\newcommand{\Intc}[1]{\mathrm{Int}[#1]}
\newcommand{\Pred}{{\normalfont\textsf{pred}}}
\newcommand{\Succ}{{\normalfont{\textsf{succ}}}}
\newcommand{\Sinks}{{\normalfont{\textsf{sinks}}}}
\newcommand{\RHS}{{\normalfont{\textsf{RHS}}}}
\newcommandx{\unsure}[2][1=]{\todo[linecolor=green,backgroundcolor=green!25,bordercolor=green,#1]{\normalsize #2}}
\newcommandx{\improvement}[2][1=]{\todo[inline,linecolor=blue,backgroundcolor=blue!05,bordercolor=blue,#1]{\normalsize #2}}
\newcommandx{\info}[2][1=]{\todo[linecolor=yellow,backgroundcolor=yellow!25,bordercolor=yellow,#1]{#2}}
\newcommandx{\floatmodel}[2][1=]{\todo[inline,linecolor=red,backgroundcolor=yellow!25,bordercolor=yellow,#1]{#2}}
\newcommandx{\thiswillnotshow}[2][1=]{\todo[disable,#1]{#2}}
\newcommandx{\celine}[2][1=]{\todo[inline,linecolor=green,backgroundcolor=green!25,bordercolor=green,caption={\normalsize \textbf{Celine}},#1]{\normalsize #2}}
\newcommandx{\karol}[2][1=]{\todo[inline,linecolor=blue,backgroundcolor=blue!25,bordercolor=blue,caption={\normalsize \textbf{Karol}},#1]{\normalsize #2}}
\newcommandx{\jesper}[2][1=]{\todo[inline,linecolor=red,backgroundcolor=red!25,bordercolor=red,caption={\normalsize \textbf{Jesper}},#1]{\normalsize #2}}
\newcommandx{\change}[1]{{\color{blue} #1}}
\title{A Subexponential Time Algorithm for Makespan Scheduling of Unit Jobs with
Precedence Constraints}
    \author{
    Jesper Nederlof\footnote{Utrecht University, The
    Netherlands, \textsf{j.nederlof@uu.nl}. Supported by
    the project CRACKNP that has received funding from the European
    Research Council (ERC) under the European Union’s Horizon 2020 research and
    innovation programme (grant agreement No 853234).}
    \and
    Céline M. F. Swennenhuis\footnote{Eindhoven University of Technology, The
    Netherlands, \textsf{c.m.f.swennenhuis@tue.nl}. Supported by the Netherlands
    Organization for Scientific Research under project no. 613.009.031b.}
    \and
    Karol W\k{e}grzycki\footnote{Saarland University and Max Planck Institute for Informatics,
        Saarbr\"ucken, Germany, \textsf{wegrzycki@cs.uni-saarland.de}.  
    This work is part of the project TIPEA that has
    received funding from the European Research Council (ERC) under the European Unions Horizon
    2020 research and innovation programme (grant agreement No. 850979). }
}
\date{}
\begin{document}

\maketitle

\begin{abstract}
    In a classical scheduling problem, we are given a set of $n$ jobs of unit
    length along with precedence constraints, and the goal is to find a schedule
    of these jobs on $m$ identical machines that minimizes the makespan.  Using
    the standard 3-field notation, it is known as $Pm|\text{prec}, p_j=1|C_{\max}$.
    Settling the complexity of $Pm|\text{prec}, p_j=1|C_{\max}$ even for $m=3$
    machines is the last open problem from the book of Garey and Johnson [GJ79] for which both upper and lower bounds on the worst-case running times of exact algorithms solving them remain essentially unchanged since the
    publication of [GJ79].
    
    We present an algorithm for this problem that runs in
    $(1+\frac{n}{m})^{\Oh(\sqrt{nm})}$ time. This algorithm is subexponential when $m =
    o(n)$. In the regime of $m=\Theta(n)$ we show an algorithm that runs in
    $\Oh(1.997^n)$ time. Before our work, even for $m=3$ machines there were no algorithms known that run in $\Oh((2-\eps)^n)$ time for some $\eps > 0$.

\footnote{
In a previous version of this manuscript~\cite{old-version} we showed 
an algorithm for \schedm that works in $\Oh(1.995^n)$ time for every $m$. As the current version
contains a \emph{subexponential} time algorithm, and is simpler (at the expense of
a slight increase in the running time in the case $m = \Theta(n)$) our previous
results~\cite{old-version} are superseded by this version. We keep the old version~\cite{old-version} as a separate paper on arxiv since it contains results not in this version.}

\end{abstract}

\thispagestyle{empty}

{
 \begin{picture}(0,0)
 \put(462,-145)
 {\hbox{\includegraphics[width=40px]{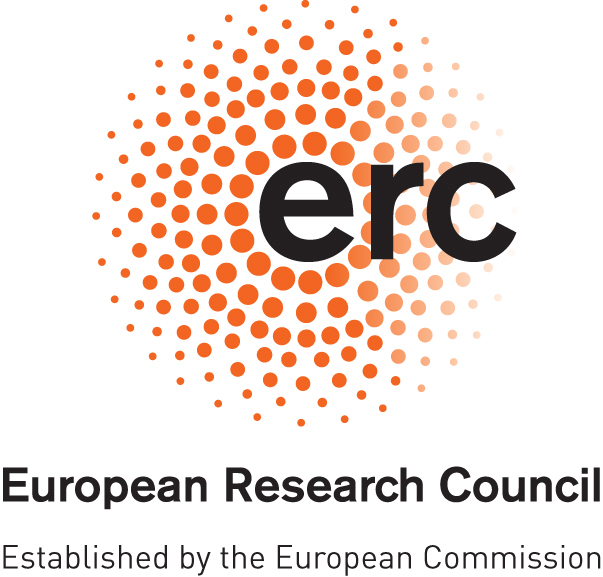}}}
 \put(452,-205)
 {\hbox{\includegraphics[width=60px]{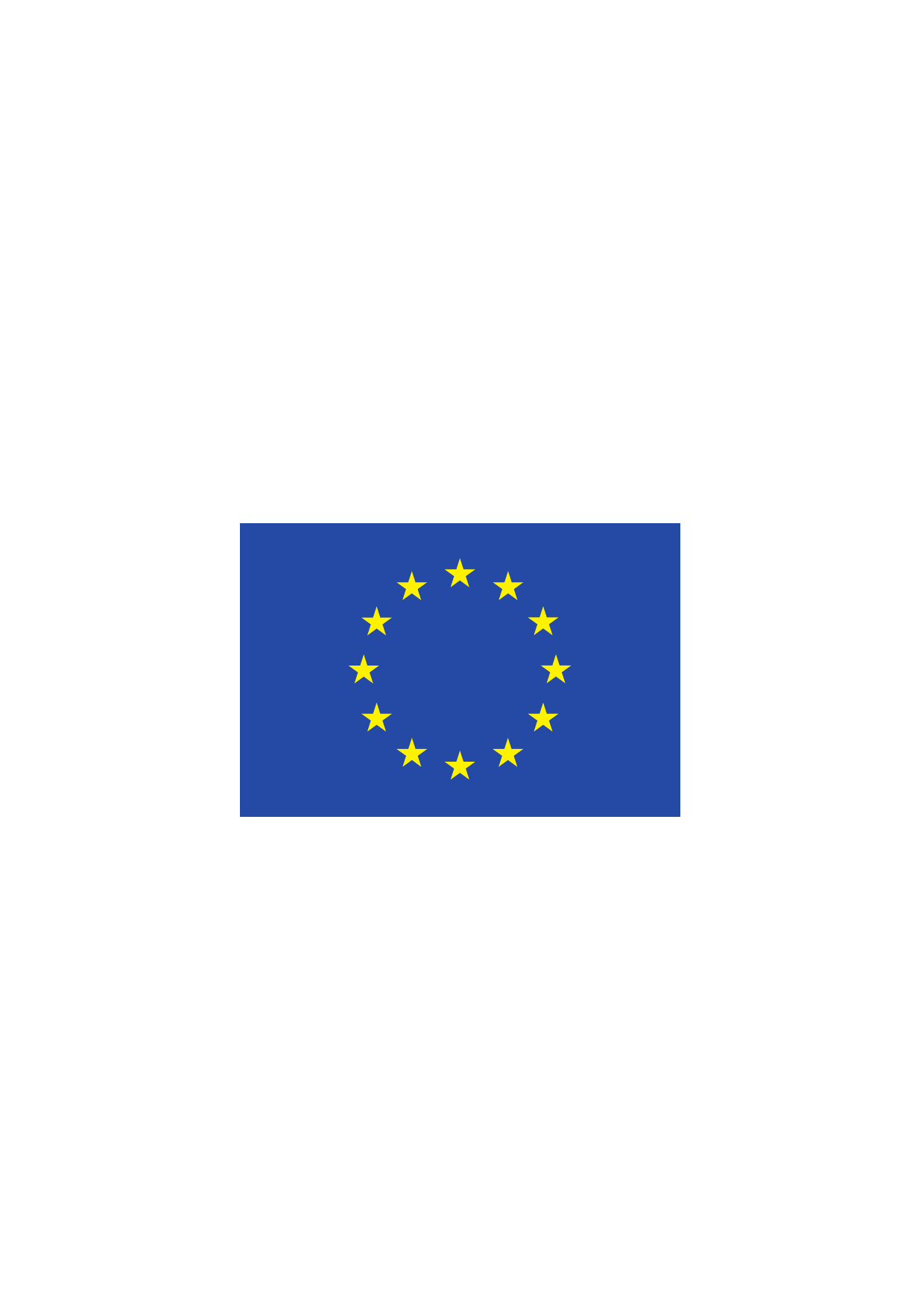}}}
 \end{picture}
}

\clearpage
\setcounter{page}{1}

\section{Introduction}\label{sec:intro}
Scheduling of precedence-constrained jobs on identical machines is a central
challenge in the algorithmic study of scheduling problems. In this problem, we
have $n$ jobs, each one of unit length, along with $m$ identical parallel
machines on which we can process the jobs. Additionally, the input contains a
set of \emph{precedence constraints} of jobs; a precedence constraint $v \prec
w$ states that job $v$ must be completed before job $w$ can be started. The
goal is to schedule the jobs non-preemptively in order to minimize the
\emph{makespan}, which is the time when the last job is completed. In the
\emph{3-field notation}\footnote{In the 3-field notation, the first entry
    specifies the type of available machine, the second entry specifies the type
    of jobs, and the last field is the objective. In our case, $P$  means that
    we have identical parallel machines.  We use $Pm$ to indicate that the number of
	machines is a fixed constant $m$.  The second entry, $\text{prec}, p_j = 1$
indicates that the jobs have precedence constraints and unit length. The last
field $C_{\max}$ means that the objective function is to minimize the completion
time.} of Graham et al.~\cite{graham1979optimization} this problem is denoted as \schedm.

In practice, precedence constraints are a major bottleneck in many projects of
significance; for this reason one of the most basic tools in project
management is a basic heuristic to deal with precedence constraints such as the \emph{Critical Path Method}~\cite{kelley1959critical}. 
Motivated by this, the problem received extensive interest in the research community
community~\cite{coffman1972optimal,Gabow:82:An-almost-linear-algorithm,lenstra1978complexity,Sethi:76:Scheduling-graphs}, particularly because it is one of the most fundamental computational scheduling problems with precedence constraints.

Nevertheless, the exact complexity of the problem is still very far from being understood. 
Since the '70s, it has been known that the problem is $\mathsf{NP}$-hard when the number of machines is
part of the input~\cite{ullman1975np}. 
However, the computational complexity remains unknown even when $m=3$:

\begin{problem}[\cite{GareyJ79,lenstra1978complexity, ullman1975np}] \label{open:3machines}
    Show that \schedthree is in $\mathsf{P}$ or that it is $\mathsf{NP}$-complete. 
\end{problem}

In fact, settling the complexity of \schedthree machines is the last open
problem from the book of Garey and Johnson~\cite{GareyJ79} for which both upper
and lower bounds on the worst-case running times of exact algorithms solving them
remain essentially unchanged since the publication of~\cite{GareyJ79}. Before
our work, the best-known algorithm follows from the early work by Held and Karp
on dynamic programming~\cite{HeldK61,10.2307/168031}. They study a scheduling
problem that is very similar to \schedm, and a trivial modification of their
algorithm implies a $\Os(2^n \binom{n}{m})$ time\footnote{We use the
    $\Os(\cdot)$ notation to hide polynomial factors in the input size.}
    algorithm for this problem.\footnote{Use dynamic programming, with a table
    entry $\DP[C]$ for each antichain $C$ of the precedence graph storing the
optimal makespan of an $m$-machine schedule for all jobs in $\Pred[C]$ (i.e.,
all jobs in $C$ and all jobs with precedence constraint to $C$).
\label{fn:dp}
} 

To this day \cref{open:3machines} remains one of the most notorious open
questions in the area (see, e.g.,~\cite{levey-rothvoss,mnich2018parameterized}).
Despite the possibility that the problem may turn out to be polynomially time
solvable, the community~\cite{schuurman1999polynomial} motivated by the lack of
progress on~\cref{open:3machines} asked if the problem admits at least
a $(1+\eps)$-approximation in polynomial time. Even this much simpler question has
not been answered yet; however, quasipolynomial-time approximations
are known by now (see,
e.g.,~\cite{levey-rothvoss,li2021towards,garg2017quasi,DasW22,bansal2017scheduling}.

Given this state of the art it is an obvious open question to solve the problem significantly faster than $\Oh(2^{n})$ even when $m=3$.
We make progress on~\cref{open:3machines} with the first exact subexponential time algorithm:
\begin{theorem} \label{thm:mainthm}
	\schedm admits an $(1+\frac{n}{m})^{\Oh(\sqrt{nm})}$ time algorithm.
\end{theorem}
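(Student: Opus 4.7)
Plan: The target running time $(1+n/m)^{\Oh(\sqrt{nm})}$ is essentially $\binom{n}{m}^{\Oh(\sqrt{n/m})}$, strongly suggesting a blocking strategy: partition an optimal schedule into $B \approx \sqrt{n/m}$ blocks of $W := \lceil \sqrt{n/m} \rceil$ consecutive time slots each, and enumerate an $\Oh(m)$-sized sketch of the schedule at each of the $B$ block boundaries. I would first reduce to the regime $M = \Oh(n/m)$ (tightly packed schedules); when $M$ is much larger due to a long chain in the precedence DAG, a similar block decomposition should apply with the parameters recalibrated to stay within the $\Oh(\sqrt{nm})$ budget of boundary sketches.

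For a schedule of makespan $M$, let $D_b$ denote the down-set of jobs completed by time $bW$; this yields a chain $\emptyset = D_0 \subseteq D_1 \subseteq \cdots \subseteq D_B = V$ with $|D_b \sm D_{b-1}| \leq mW = \Oh(\sqrt{nm})$ for each $b$. The skeleton of the algorithm is a DP along this chain: the transition from $D_{b-1}$ to $D_b$ verifies that the $\leq mW$ jobs in $D_b \sm D_{b-1}$ can be placed on $m$ machines in $W$ consecutive slots while respecting precedence. This inner check is itself an instance of \sched of size $\Oh(\sqrt{nm})$ and is solvable in $2^{\Oh(\sqrt{nm})}$ time by a Held--Karp-style DP over subsets of the block.

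The principal obstacle is pruning the DP's state space, as naively $|\{D_b\}|$ may be as large as $2^n$. To meet the target bound one must show that every relevant $D_b$ is determined by a small \emph{signature} --- for instance, the antichain of the $\leq m$ jobs scheduled in the last slot of block $b$, combined with $\Oh(1)$ bits of auxiliary data such as the exact makespan contribution of that block. Each signature then has at most $\binom{n}{m} \leq (1+n/m)^{\Oh(m)}$ options, and over the $B = \Oh(\sqrt{n/m})$ boundaries the total enumeration is $(1+n/m)^{\Oh(mB)} = (1+n/m)^{\Oh(\sqrt{nm})}$, matching the theorem.

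Establishing this compression is the most delicate step. The plan is to fix a canonical optimal schedule (e.g., the lexicographically earliest under a fixed topological tiebreak), prove that its boundary down-sets can be uniquely reconstructed from the sequence of signatures, and then drive the DP on signatures alone. The subtle case is a job whose predecessors complete during block $b-1$ but which is itself scheduled in block $b$: the signature must encode enough boundary information --- perhaps both an ``outgoing'' antichain at the end of block $b-1$ and an ``incoming'' antichain at the start of block $b$ --- so that the in-block sub-problem is genuinely self-contained and depends only on the two adjacent signatures, rather than on the full set $D_{b-1}$.
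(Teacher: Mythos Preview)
Your proposal identifies the right arithmetic target (roughly $\binom{n}{\le m}^{\Oh(\sqrt{n/m})}$) but leaves the crucial structural step as a promise rather than a proof. The claim that each boundary down-set $D_b$ can be reconstructed from an $\Oh(m)$-size signature --- say, the $\le m$ jobs in the last slot of the block --- is the whole difficulty, and in general it is false. The maximal antichain of a down-set need not have size $\le m$; knowing only which $m$ jobs occupy slot $bW$ does not determine which jobs were scheduled earlier. Fixing a canonical optimal schedule does not rescue this: for the DP to be \emph{sound} (not merely complete), every signature sequence it accepts must correspond to a genuine schedule of all jobs, so the map from signatures to down-sets must be well-defined independently of the particular optimum, not just for one canonical one.

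The paper supplies exactly the missing mechanism. It proves (\cref{thm:correctness}) that some optimal schedule admits a \emph{proper separator}: a data-dependent sequence of special slots $S_0,S_1,\ldots,S_\ell$ such that the jobs strictly between $S_i$ and $S_{i+1}$ are \emph{precisely} $\Succ(S_i)\setminus(\Succ[S_{i+1}]\cup\Sinks(G))$, and all sinks lie either in the $S_i$'s or after $S_\ell$. Two consequences do the work your signature idea needs: (i) the non-sink jobs processed up to $S_q$ are exactly $V(G)\setminus(\Succ(S_q)\cup\Sinks(G))$, so the antichain $S_q\setminus\Sinks(G)$ of size $\le m$ genuinely determines the non-sink part of the prefix; and (ii) sinks, having no successors, are interchangeable, so only their \emph{count} matters. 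Thus the DP state is $(X,k)$ with $X\in\binom{V}{\le m}$ and $k\le n$, which is the compression you were hoping for. Note also that the separator slots are not at fixed width-$W$ intervals; they sit at the (a priori unknown) positions where the slot contains fewer than $m$ non-sinks, and the running-time bound comes from a branching/memoization argument balancing ``many sinks $\Rightarrow$ fast progress'' against ``few sinks $\Rightarrow$ few distinct subproblems'' with threshold $\lambda=\lfloor\sqrt{nm}\rfloor$, rather than from a fixed block count.
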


Note that for $m=3$, this algorithm runs in $2^{\Oh(\sqrt{n} \log n)}$ time.
From an \emph{optimistic} perspective on \cref{open:3machines},
Theorem~\ref{thm:mainthm} could be seen as a clear step towards resolving it.
Indeed, empirically speaking, one can draw a parallel between \schedm and the other computational problems like the
Graph Isomorphism problem. Initially, the complexity of this problem was also
posed as an open question in the textbook of Garey and Johnson~\cite{GareyJ79}.
Very early on, a $2^{\Oh(\sqrt{n} \log n)}$-time algorithm
emerged~\cite{luks82}, which only recently was used as a crucial ingredient in
the breakthrough quasi-polynomial algorithm of Babai~\cite{babai16}. A similar course
of research progress also occurred for Parity Games and Independent Set on
$P_k$-free graphs, for which $2^{\Oh(\sqrt{n} \log n )}$-time
algorithms~\cite{basco19,subexppargames, parity-games2} served as a stepping
stone towards recent quasi-polynomial time
algorithms~\cite{parity-games,gartland20}. One may hope that our structural
insight will eventually spark similar progress for \schedm.

From a \emph{pessimistic} perspective on \cref{open:3machines}, we believe
that Theorem~\ref{thm:mainthm} would be somewhat curious if \schedthree indeed
is $\mathsf{NP}$-complete. Typically, subexponential time algorithms are known for
$\mathsf{NP}$-complete problems when certain geometrical properties come into
play (e.g., planar graphs and their extensions~\cite{Fomin2016}, Euclidean
settings~\cite{BergBKMZ20}) or, in the case of graph problems, the choices of
parameters that are typically larger than the number of vertices (such as the
number of edges to be deleted in graph modification
problems~\cite{AlonLS09,FominV13}). Additionally, our positive result may guide the design of $\mathsf{NP}$-hardness reductions, as it excludes  reductions with linear blow-up (assuming the Exponential Time Hypothesis).

\subparagraph*{Intuition behind Theorem~\ref{thm:mainthm}.}

We outline our approach at a high level. For simplicity, fix $m=3$.
The precedence constraints form a poset/digraph, referred to as the \emph{precedence
graph} $G$. We assume that $G$ is the transitive closure of itself. We use standard poset terminology (see~\cref{sec:Prelim}). 

We design a divide and conquer algorithm and depart from the algorithm of Dolev and Warmuth~\cite{dolev1984scheduling} for
\schedthree which runs in $n^{\Oh(h)}$ time, where $h$ is the height of the
longest chain of $G$.
Dolev and Warmuth~\cite{dolev1984scheduling} show 
that for non-trivial
instances, there always exists an optimal schedule that can be decomposed into a left and a right subschedule, and the jobs in these two subschedules are determined by only three jobs. 
This decomposition allows Dolev and Warmuth to split the problem into
(presumably much smaller) \emph{subproblems}, i.e., instances of \schedthree
with the precedence graph being an induced subgraph of the transitive closure of
$G$ (the original precedence graph), by making only $\Oh(n^3)$ guesses. 
The running time 
$n^{\Oh(h)}$ is by arguing that $h$ (the length of the longest chain) strictly decreases in
each subproblem.

Unfortunately, this algorithm may run in $n^{\Oh(n)}$ time. For example,
when $G$ consists of three chains of length $n/3$, it branches $h = n/3$ times into $\Oh(n^3)$ subproblems. To improve upon this we need new insights.

We recursively apply the decomposition of Dolev and Warmuth~\cite{dolev1984scheduling} on the right subschedule to find    
that there always exists an optimal schedule that
can be decomposed as 
\begin{displaymath}
\sigma = (S_0 \oplus \sigma_0 \oplus S_1 \oplus \sigma_1 \oplus \ldots \oplus S_\ell \oplus \sigma_\ell),
\end{displaymath}
where $S_0, \ldots, S_\ell$ are single timeslots in the schedule
(see~\cref{fig:proper-separator}) which we refer to as a \emph{proper separator}.
This decomposition has two important properties. Firstly, it ensures that the sinks
of $G$ are processed only within the sets $S_1,\ldots,S_\ell$ and $\sigma_{\ell}$,
with $\sigma_\ell$ containing \emph{only} sinks of $G$. Secondly, for all $i <
\ell$ we can precisely determine the jobs in schedule
$\sigma_i$ based on $S_i$
and $S_{i+1}$, and all jobs in $\sigma_i$ are successors of jobs from $S_i$.
For a formal statement of the decomposition, see~\cref{sec:decomposition}.
\begin{figure}[ht!]
    \centering
	\includegraphics[width=0.75\textwidth]{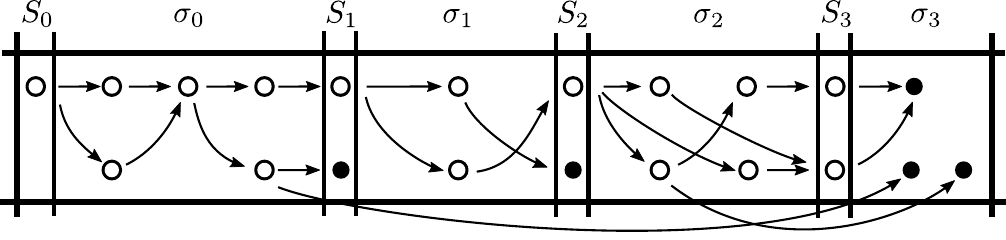}
	\caption{Example of an optimal schedule with proper separator for $m=2$ machines
	and makespan $12$. Sinks are solid black-filled circles, other jobs are
white-filled. Sinks are processed only in $S_1, S_2, S_3$ and $\sigma_3$. Jobs in
$\sigma_i$ are non-sinks that are successors of $S_i$ and not successors of
$S_{i+1}$ ($i \in \{0,1,2\}$). In~\cref{sec:decomposition}, we prove that such a
decomposition exists for some optimal schedule.}
    \label{fig:proper-separator}
\end{figure}

As the jobs in $\sigma_i$ (for $i < \ell$) in the decomposition are
determined by $S_{i}$ and $S_{i+1}$, the algorithm will recursively compute the
makespan of such potential subschedules by considering all $n^6$ possible
combinations. The existence of the decomposition ensures that there are no sinks in
$\sigma_0,\dots,\sigma_{\ell-1}$, so we make more progress when there are many sinks:
\begin{insight}
    \textbf{Insight 1:} If a subproblem has at least $\sqrt{n}$ sinks, then the
    number of jobs in its resulting subproblems decreases by $\Omega(\sqrt{n})$.
\end{insight}
To demonstrate that this insight is useful, consider a straightforward recursion. If
each subproblem of the recursion contains at least $\sqrt{n}$ sinks, then the depth
of the recursion is just $\Oh(\sqrt{n})$. Consequently, the size of
the recursion tree can be bounded by $2^{\Oh(\sqrt{n}\log{n})}$.

To handle the case with few sinks, we use a different observation.
\begin{insight}
    \textbf{Insight 2:}
    There are 
	$2^{\Oh(\sqrt{n} \log{n})}$ distinct subproblems generated with at most
$\sqrt{n}$ sinks.
\end{insight}
This follows from the decomposition, as the subproblems generated by the recursion are
uniquely described by the sets of sources and sinks within the associated precedence
graph. Moreover, this graph has at most~$3$ sources\footnote{For convenience, we
view a subproblem here as $S_i \oplus \sigma_i$ (instead of just $\sigma_i$).}
namely a set $S_i$ as implied by the properties of the decomposition. Leveraging
this fact, we can utilize memorization to store the optimal makespan of these subproblems
and solve them only once. 

By combining Insights 1 and 2, we can employ a win-win strategy.  We use a lookup table to handle
subproblems with at most $\sqrt{n}$ sinks, and when a subproblem has more than
$\sqrt{n}$ sinks, we make significant progress within each of the considered
subproblems. This results in a branching tree of size $2^{\Oh(\sqrt{n}\log n)}$.

However, it is still highly nontrivial to see that we can output the optimal makespan based on the makespan of the $n^6$ created subproblems efficiently. We
address this by showing that we can reconstruct the schedule in
$n^{\Oh(m)}$ time with the assistance of dynamic programming. For this, we need
the following crucial observation.

\begin{insight}
\textbf{Insight 3:} 
When computing the optimal makespan, based on the solutions of the subproblems, 
it is enough
	to keep track of one timeslot $S_i$ and the \emph{number of sinks}
that have been scheduled thus far.
\end{insight} 
The first part is a corollary of the properties of the decomposition. The second part of this observation relies on the fact that
sinks are often interchangeable as they do not have successors. Therefore, it suffices to check whether the number of available sinks is at least the number of sinks scheduled thus far.

It should be noted that we can achieve the running time guaranteed by
~\cref{thm:mainthm} with an algorithm that follows more naturally from the
decomposition of~\cite{dolev1984scheduling}. In particular, we can show that the algorithm from \cite{dolev1984scheduling} augmented with a lookup table and an isomorphism check meets this requirement. However, the running time analysis of
this alternative algorithm is more complicated than what is presented in this
paper, so we opted to provide the current algorithm.

\subparagraph*{Unbounded number of machines.}
When $m$ is given as input, the problem
is known to be $\mathsf{NP}$-hard~\cite{lenstra1978complexity,ullman1975np},
and the reduction from~\cite{lenstra1978complexity,ullman1975np} strongly indicates that $2^{o(n)}$ time algorithms are not possible: Any $2^{o(n)}$ algorithm for \sched contradicts a plausible hypothesis about
\textsc{Densest $\kappa$-Subgraph}~\cite{pasin} and implies a $2^{o(n)}$ time algorithm for the \textsc{Biclique} problem~\cite{JansenLK16} on $n$-vertex graphs.\footnote{Curiously, it seems to be unknown whether any of such algorithms contradict the Exponential Time Hypothesis.} 

Natural dynamic programming over subsets\footref{fn:dp} of the jobs solves the problem in
$\Os(2^n\binom{n}{m})$ time. An obvious question is whether this can
be improved. It is conjectured that not all $\mathsf{NP}$-complete problems can be solved
strictly faster than $\Os(2^n)$ (where $n$ is some natural measure of the input
size). The Strong Exponential Time Hypothesis (SETH) postulates that $k$-SAT
cannot be solved in $\Oh(c^n)$ time for any constant $c < 2$. Breaking the
$\Os(2^n)$ barrier has been an active area of research in recent years, with
results including algorithms with running times of $\Oh((2-\eps)^n)$ for
\textsc{Hamiltonian Cycle} in undirected graphs \cite{bjorklund2014determinant},
\textsc{Bin Packing} with a constant number of bins \cite{nederlof2021faster},
and single machine scheduling with precedence constraints minimizing the total
completion time \cite{cygan2014scheduling}.
We add \sched problem to this list and show that it admits a faster than
$\Os(2^{n})$ time algorithm, even when $m=\Theta(n)$ is given as input.

\begin{theorem} \label{thm:exactexponential}
    \sched admits an $\Oh(1.997^n)$ time algorithm.
\end{theorem}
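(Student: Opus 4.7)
The plan is to dispatch between two algorithms depending on the ratio $m/n$, using a threshold $\alpha \in (0,1)$ that will be fixed at the end.

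For $m \le \alpha n$ I would invoke \cref{thm:mainthm}. Substituting $m = \alpha n$ into its running time yields
\[
    (1+n/m)^{\Oh(\sqrt{nm})} \;\le\; 2^{\Oh(n\sqrt{\alpha}\log(1+1/\alpha))}.
\]
Since $\sqrt{\alpha}\log(1+1/\alpha) \to 0$ as $\alpha \to 0^{+}$, for any prescribed base $b > 1$ we can choose $\alpha$ small enough that the exponent drops below $(\log_2 b)\cdot n$; taking $b = 1.997$ gives the desired $\Oh(1.997^n)$ bound in this regime.

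For the complementary regime $m > \alpha n$ I would design a separate algorithm exploiting $m = \Theta(n)$. My starting point is the Held--Karp dynamic program $\DP[I] = 1 + \min_{A} \DP[I \setminus A]$, where $I$ ranges over ideals of the precedence poset and $A$ ranges over subsets of the maximal elements of $I$ with $|A| \le m$. The reduction to maximal elements is valid because any non-maximal job placed in the final timeslot of a schedule for $I$ would force a strictly later successor, contradicting finality. The naive analysis gives $\Os(2^n\binom{n}{m})$, which exceeds $2^n$ when $m = \Theta(n)$, so the task is to beat $2^n$. The two ingredients I anticipate are (i) a branching step that makes significant progress whenever the ``wavefront'' $|\max(I)|$ exceeds $m$, and (ii) structural pruning that avoids enumerating all ideals. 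A concrete plan is to introduce a measure-and-conquer potential that amortizes branching cost over the bounded recursion depth $\Oh(1/\alpha)$, tuned so that the aggregate contribution fits under $1.997^n$.

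The hard part will be (ii): showing that only $o(2^n)$ distinct ideals are actually visited once memoization is employed. A direct counting of ideals does not suffice, since an adversarial precedence graph may have $\Theta(2^n)$ of them. I expect that exploiting the step-size constraint $|A| \le m = \Theta(n)$ along every DP transition confines visits to ideals whose sizes lie in only a bounded number of ``aligned'' residue classes modulo $m$, bounding their count by the desired margin. Balancing the two regimes then fixes $\alpha$ and yields the final constant $1.997$.
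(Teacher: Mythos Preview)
Your high-level dispatch on $m/n$ and your treatment of the regime $m \le \alpha n$ via \cref{thm:mainthm} are exactly right; indeed $\sqrt{\alpha}\log(1+1/\alpha)\to 0$, so any target base strictly above $1$ is attainable for small enough $\alpha$.

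The gap is entirely in the large-$m$ regime. Your proposed mechanism---that the constraint $|A|\le m$ forces the visited ideals to have sizes in only $\Oh(1)$ residue classes modulo $m$---is false: nothing prevents the DP from taking steps of all sizes $1,2,\ldots,m$, so every ideal size in $\{0,\ldots,n\}$ remains reachable, and even for a fixed size the number of ideals can be $\Theta(2^n)$ (take $G$ with no arcs). Measure-and-conquer over recursion depth $\Oh(1/\alpha)$ also does not help here, because the DP is not a branching tree but a DAG over ideals, and memoization does not reduce the number of \emph{distinct} ideals. So as written, your large-$m$ plan does not yield anything below $\Os(2^n)$.

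What the paper actually does for $m>\alpha n$ is quite different and uses three concrete ingredients you are missing. First, a reduction rule: if $|\Sinks(G)|\le m$ one can always schedule all sinks in the last timeslot, remove them, and decrement the target makespan; iterating this, one may assume $|\Sinks(G)|\ge m$. Second, the DP is indexed only by downward-closed subsets of $V\setminus\Sinks(G)$ together with an integer \emph{count} of how many sinks have been scheduled (sinks are interchangeable once their predecessors are done). Third, the transition---which would naively cost a $\binom{n}{m}$ factor---is computed via fast subset convolution (\cref{Thm:zetamob}), giving $\Os(2^{|V\setminus\Sinks(G)|})=\Os(2^{n-|\Sinks(G)|})\le \Os(2^{n-m})$. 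For $m\ge\alpha n$ this is $\Os(2^{(1-\alpha)n})$, and balancing against the small-$m$ bound fixes $\alpha\approx 1-\log_2(1.997)$.
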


Observe that for $m \le \delta n$, \cref{thm:mainthm}
yields an $\Oh((2-\eps)^n)$ time algorithm (for some small constants
$\eps,\delta>0$).
Theorem~\ref{thm:exactexponential} is proved by combining this with an algorithm that runs in $\Os(2^{n-m})$ time. This algorithm improves over the mentioned $\Os(2^n\binom{n}{m})$ time algorithm by handling sinks more efficiently (of which we can assume there are at least $m$ with a reduction rule) and employing the subset convolution technique from~\cite{bjorklund2007fourier} to avoid the $\binom{n}{m}$ term in the running time.

\subsection*{Related Work}
\subparagraph*{Structured precedence graphs.}
In this research line for the problem, \sched is shown to be solvable in polynomial time for many structured inputs. 
Hu~\cite{hu1961parallel} gave a polynomial time algorithm when the precedence
graph is a tree. This was later improved by
Sethi~\cite{Sethi:76:Scheduling-graphs}, who showed that these instances can be
solved in $\Oh(n)$ time. Garey et al.~\cite{garey1983scheduling} considered a
generalization when the precedence graph is an \emph{opposing forest}, i.e., the
disjoint union of an in-forest and out-forest. They showed that the problem is $\mathsf{NP}$-hard when $m$ is
given as an input, and that the problem can be
solved in polynomial time when $m$ is a fixed constant. Papadimitriou and
Yannakakis~\cite{papadimitriou1979scheduling} gave an $\Oh(n+m)$ time algorithm
when the precedence graph is an \emph{interval order}. 
Fujii et al.~\cite{fujii1969optimal} presented the first polynomial time
algorithm when $m=2$. Later, Coffman and Graham~\cite{coffman1972optimal} gave
an alternative $\Oh(n^2)$ time algorithm for two machines. The running time was later
improved to near-linear by Gabow~\cite{Gabow:82:An-almost-linear-algorithm} and 
finally to truly linear by Gabow and Tarjan~\cite{gabow1985linear}. 
For a more detailed overview and other variants of \sched, see the survey of Lawler
et al.~\cite{lawler1993sequencing}.

\subparagraph*{Exponential time algorithms.}
Scheduling problems have been extensively studied from the perspective of exact algorithms, especially in the last decade. From the point of view of parameterized complexity, the natural parameter for \sched is the number of machines, $m$.
Bodlaender and Fellows~\cite{DBLP:journals/orl/BodlaenderF95} show that problem is $\mathsf{W[2]}$-hard parameterized by $m$. Recently, Bodlaender et al.~\cite{bodlaender2021parameterized} showed that \sched parameterized by $m$
is $\mathsf{XNLP}$-hard, which implies $\mathsf{W[t]}$-hardness for every $t$.
These results imply that it is highly unlikely that \sched can be solved in $f(m)n^{\Oh(1)}$ for any function $f$.
A work that is very important for this paper is by Dolev and Warmuth~\cite{dolev1984scheduling}, who show that the problem can be solved in time $\Oh(n^{h(m-1)+1})$, where $h$ is the cardinality of the longest chain in the precedence graph.
Several other parameterizations have been studied, such as the maximum
cardinality of an antichain in the precedence graph. We refer to a survey by
Mnich and van Bevern~\cite{mnich2018parameterized}.

Lower bounds conditioned on the Exponential Time Hypothesis are given (among others) by Chen et al. in~\cite{ChenJZ18}.
T'kindt et al. give a survey on moderately exponential time algorithms for scheduling problems~\cite{DBLP:journals/4or/TkindtCL22}.
Cygan et al.~\cite{cygan2014scheduling} gave an
$\Oh((2-\eps)^n)$ time
algorithm (for some constant $\eps > 0$)
for the problem of scheduling jobs of arbitrary length with precedence
constraints on one machine.

\subparagraph*{Approximation.}
The \sched problem has been extensively studied through the lens of approximation algorithms, where the aim is to approximate the makespan.
A classic result by Graham~\cite{6767827} gives a polynomial time $(2-1/m)$-approximation, and Svensson~\cite{svensson2010conditional} showed hardness for improving this for large $m$.
In a breakthrough, Levey and Rothvo\ss~\cite{levey-rothvoss} give a
$(1+\eps)$-approximation in $\exp\left(\exp\left(\Oh(\frac{m^2}{\eps^2}
\log^2\log n)\right)\right)$ time.
This strongly
indicates that the problem is not $\mathsf{APX}$-hard for constant $m$. 
This was subsequently improved
by~\cite{garg2017quasi} and simplified in~\cite{DasW22}.  The currently fastest algorithm is due to Li
\cite{li2021towards} who improved the running time to
$n^{\Oh\left(\frac{m^4}{\varepsilon^3}\log^3\log(n)\right)}$.
Note
that the dependence on $\eps$ in these algorithms is prohibitively high if one
wants to solve the problem exactly. To do so, one would need to set $\eps = \Oh(1/n)$,
yielding an exponential time algorithm. 
A prominent open question is to give a PTAS even when the number of machines is fixed (see the recent survey of Bansal~\cite{bansal2017scheduling}).
In~\cite{levey-rothvoss} it was called a tantalizing open question whether there are similar results for $P3\, \vert \, \mathrm{prec} \vert \, C_{\max}$. 

There are some interesting similarities between this research line and our work:
In both the maximum chain length plays an important role, and a crucial insight
is that there is always an (approximately) optimal solution with structured decomposition.
Yet, of course, approximation schemes do not guarantee subexponential time algorithms. For example, in contrast to the aforementioned question by~\cite{levey-rothvoss}, it is highly unlikely that even $P2\, \vert \, \mathrm{prec}, p_j=\{1,2\} \vert \, C_{\max}$ has a $2^{o(n)}$ time algorithm: This follows from the reduction by Ullman~\cite{ullman1975np} to \sched.

\subsection*{Organization}

In~\cref{sec:Prelim} we include preliminaries. In~\cref{sec:decomposition} we
present the decomposition and the algorithm of Dolev and
Warmuth~\cite{dolev1984scheduling}.  In~\cref{sec:subexponential} we
prove~\cref{thm:mainthm} and in~\cref{sec:exact} we
prove~\cref{thm:exactexponential}. \cref{sec:conc} contains the conclusion and
further research directions. \cref{sec:LB} contains a conditional lower bound on
\schedm.

\section{Preliminaries} \label{sec:Prelim}

We
use $\Os(\cdot)$ notation to hide polynomial factors in the input size. Throughout the paper, we
use $\binom{n}{\le k} \coloneqq \sum_{i=0}^k \binom{n}{i}$ shorthand notation.
We often use $n$ as the number of jobs in the input and $m$ as the
number of machines.

The set of precedence constraints is represented with a directed acyclic graph $G=(V,A)$, referred to as the \emph{precedence graph}. We will interchangeably use
the notations for arcs in $G$ and the partial order, i.e., $(v,w) \in A$ if and only if $v\prec w$. Similarly, we use the name \emph{jobs} to refer to the vertices of $G$. Note that we assume that $G$ is the transitive closure of itself, i.e. if $u\prec v$ and $v \prec w$ then $u \prec w$.

A set $X\subseteq V$ of jobs is called a \emph{chain} if all jobs in $X$ are
comparable to each other. Alternatively, if all jobs in $X$ are incomparable to
each other, $X$ is called an \emph{antichain}. For a job $v$, we denote
\begin{align*}
    \Pred(v) & \coloneqq \{u: u \prec v\}, &\qquad \Pred[v] & \coloneqq \Pred(v) \cup \{v\},\\
    \Succ(v) & \coloneqq \{u: v \prec u\}, & \qquad  \Succ[v] & \coloneqq \Succ(v) \cup \{v\},
\intertext{as the sets of all predecessors and respectively successors of $v$. For a set of jobs $X$ we let}
    \Pred(X) & \coloneqq \bigcup_{v \in X}\Pred(v), &\qquad \Succ(X) & \coloneqq \bigcup_{v \in X}\Succ(v), \\
    \Pred[X] & \coloneqq \Pred(X) \cup X, &\qquad \Succ[X] & \coloneqq \Succ(X) \cup X.
\end{align*}
We refer to a job that does not have any predecessor as a \emph{source}, and
we say that a job that does not have any successors is a \emph{sink}. We use
$\Sinks(G)$ to refer to the set of all sinks in $G$.

A schedule is a sequence $(T_1,\ldots,T_\makespan)$ of pairwise disjoint sets
$T_1,\ldots,T_\makespan \subseteq V$ that we will refer to as timeslots.
Consider $X \subseteq V$ such that for all $v, w, x \in V$, we have that $v \prec w \prec x$
and $v,x \in X$ implies $w \in X$.  We say that $(T_1,\ldots,T_\makespan)$ is a
\emph{feasible schedule} for the set $X$ if (i) $\bigcup_{i=1}^\makespan T_i =
X$, (ii) $|T_i| \le m$ for every $i \in \{1,\dots,\makespan\}$, and (iii) if $u \in T_i$,
$v \in T_j$, and $u \in \Pred(v)$, then $i < j$. We say that such a schedule has
makespan $\makespan$. Sometimes, we omit the set $X$ in the definition of a
feasible schedule when we simply mean a feasible schedule for all jobs $V$. For
a schedule $\sigma$, we use $|\sigma|$ to denote its makespan and $V(\sigma)$ to
denote the set of jobs in the schedule.
To \emph{concatenate} two schedules, we use the $\oplus$ operator, where
$(T_1,\ldots,T_{\ell}) \oplus (T_{\ell+1},\ldots,T_k) = (T_1,\ldots,T_k)$. Note
that we treat a single timeslot as a schedule itself, therefore
$(T_1,\ldots,T_{\ell})\oplus T_{\ell+1} = (T_1,\ldots,T_{\ell+1})$.

\section{Decomposing the Schedule}

\newcommand{\Jobs}{\mathsf{Jobs}}
\newcommand{\Slots}{{\normalfont{\textsf{SepSlots}}}}
\newcommand{\NewSinks}{\mathsf{NewSinks}}
\newcommand{\BranchTree}{\mathbb{T}_{\mathsf{br}}}
\newcommand{\algname}{\mathsf{schedule}}

Now, we prove there is an optimal schedule that
can be decomposed in a certain way, and in
\cref{sec:dynamic-programming}, we show how an optimal schedule with
such a decomposition can be found given the solutions to certain subproblems.

\subsection{Proper Separator}\label{sec:decomposition}
\label{subsec:dec}
\newcommand{\CA}{\textnormal{\textbf{(A)}}\xspace}
\newcommand{\CB}{\textnormal{\textbf{(B)}}\xspace}
\newcommand{\CC}{\textnormal{\textbf{(C)}}\xspace}

Now, we formally define a \emph{proper separator} of a schedule. We stress that 
this decomposition is an extension of the decomposition 
in~\cite{dolev1984scheduling} known under the name of
\emph{zero-adjusted schedule}.  

\begin{definition}[Proper Separator]
\label{def:proper}  Given a precedence graph $G$ with at most $m$ sources $S_0$ and a feasible schedule $\sigma$.
	 We say that sets $S_1,\ldots,S_\ell \subseteq V(G)$ form a
	 \emph{separator} of $\sigma$ if $\sigma$ can be written as:
      \begin{displaymath}
        \sigma = \left( S_0 \oplus \sigma_0 \oplus S_1 \oplus \sigma_1 \oplus S_2 \oplus \ldots
        \oplus S_\ell \oplus \sigma_{\ell} \right)
		,
    \end{displaymath}
    where $\sigma_i$ is a feasible schedule for $V(\sigma_i)$ for all $i \in \{0,\dots,\ell\}$. 
	We will say that separator $S_1,\ldots,S_\ell$ of $\sigma$ is \emph{proper}
	if all of the following conditions hold:
    \begin{enumerate}
        \item[\CA] $V(\sigma_i) = \Succ(S_i) \setminus (\Succ[S_{i+1}] \cup \Sinks(G))$ for all $i \in \{0, \dots, \ell-1 \}$, 
        \item[\CB] $S_j \subseteq \Succ(S_i) \cup \Sinks(G)$ for all $0\le i < j \le \ell$,
        \item[\CC]  $V(\sigma_{\ell}) \subseteq \Sinks(G)$.
    \end{enumerate}
\end{definition}
Note that the
assumption on the number of sources is mild and only for notational convenience.
The power of this decomposition lies in the fact that, given the input graph
$G$, the jobs in $\sigma_i$ are completely determined by sets $S_i$ and
$S_{i+1}$, both of size at most $m$.\footnote{Note that in~\cref{def:proper}, a proper separator can be empty. This can happen in the degenerate case when the graph is an out-star. In that case, $\ell=0$ and after the first moment $S_0$, only sinks are scheduled in $\sigma_0$, so properties \CA, \CB, and \CC hold.}

\begin{theorem}\label{thm:correctness}
	For any precedence graph $G$ with at most $m$ sources, there is a schedule of minimum makespan with a proper separator.
\end{theorem}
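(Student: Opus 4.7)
My plan is to prove Theorem~\ref{thm:correctness} by induction on $|V(G)|$, where the key step is first to normalize an arbitrary minimum-makespan schedule and then peel off the first separator and recurse. I would start from any optimal schedule $\sigma = (T_1, \ldots, T_\makespan)$ and apply two makespan-preserving normalizations. First, since sources have no predecessors and $|S_0| \le m$, I may assume $T_1 = S_0$: any job in $T_1$ with no predecessors in earlier slots is forced to be a source. Second, while there exist a sink $u \in T_t$ and a non-sink $v \in T_{t'}$ with $t' > t$ whose predecessors all lie in $T_{<t}$, I swap $u$ and $v$; this is feasible because sinks have no successors, and the process terminates since $\sum_{v \text{ non-sink}} t(v)$ strictly decreases. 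After normalization the schedule satisfies the \emph{non-sink-priority property}: whenever a timeslot $T_t$ contains a sink, every non-sink whose predecessors are all in $T_{<t}$ is scheduled in $T_{\le t}$.

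If $V(G) \setminus S_0 \subseteq \Sinks(G)$, take $\ell = 0$ and $\sigma_0 := T_2 \oplus \cdots \oplus T_\makespan$; condition \CC{} is immediate and \CA{}, \CB{} are vacuous. Otherwise, let $t^*$ be the smallest $t \ge 2$ such that $T_t$ contains a sink (such $t^*$ exists because every maximal chain in $G$ ends in a sink), and set $S_1 := T_{t^*}$, $\sigma_0 := T_2 \oplus \cdots \oplus T_{t^* - 1}$. By the choice of $t^*$ we have $V(\sigma_0) \subseteq \Succ(S_0) \setminus \Sinks(G)$ (non-sources lie in $\Succ(S_0)$ by transitive closure), and $V(\sigma_0) \cap \Succ[S_1] = \emptyset$ by precedence. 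For the reverse inclusion in \CA{}, take any non-sink $v \in \Succ(S_0) \setminus \Succ[S_1]$: then $v \notin S_1$, and no predecessor of $v$ can lie in $\Succ[S_1]$ (else transitivity would place $v \in \Succ[S_1]$), so all predecessors of $v$ lie in $S_0 \cup V(\sigma_0) \subseteq T_{<t^*}$. The non-sink-priority property forces $v \in T_{\le t^*}$, and since $v \notin S_1$, we get $v \in V(\sigma_0)$. Condition \CB{} for $i=0$ is immediate: every non-sink in any $S_j$ is a non-source, hence a $G$-successor of $S_0$.

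To handle the tail, view $\tau := T_{t^*} \oplus \cdots \oplus T_\makespan$ as an optimal schedule for the residual graph $G' := G[V \setminus (S_0 \cup V(\sigma_0))]$, and augment $G'$ with an arc from each $s \in S_1$ to every source of $G'$ outside $S_1$. After normalization, every non-sink source of $G'$ was placed in $T_{t^*} = S_1$ (by the same non-sink-priority argument applied to sources of $G'$), so $\Sources(G') \setminus S_1 \subseteq \Sinks(G)$; hence the augmentation only adds arcs \emph{into} sinks. This makes $S_1$ the unique source set of the augmented graph $\widetilde{G}$ (with $|S_1| \le m$), preserves feasibility of $\tau$ (since $S_1$ is scheduled first in $\tau$), and preserves optimality. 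The inductive hypothesis applied to $\widetilde{G}$ yields a proper separator $S_1 \oplus \sigma_1 \oplus \cdots \oplus S_\ell \oplus \sigma_\ell$ of some optimal $\tau'$; then $\sigma' := S_0 \oplus \sigma_0 \oplus \tau'$ is optimal for $G$, and because the added arcs target only sinks (so $\Sinks(\widetilde{G}) = \Sinks(G) \cap V(G')$ and $\Succ_{\widetilde{G}}$ differs from $\Succ_G$ only on sinks), the inductive \CA{}--\CC{} for $\widetilde{G}$ translate verbatim into \CA{}--\CC{} for $G$.

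The main obstacle is making Stage~2 normalization rigorous: a single swap can be blocked when $v$'s predecessors themselves include non-sinks stuck at too-late timeslots, requiring a cascade of swaps. The cleanest route is to define the normalized schedule globally as one minimizing $\sum_{v \text{ non-sink}} t(v)$ among all optimal schedules satisfying $T_1 = S_0$, and verify directly that at any such minimizer a violation of the non-sink-priority property would exhibit an improving swap, contradicting minimality. Beyond this, the remaining subtlety is translating \CA{}, \CB{} between $\widetilde{G}$ and $G$; this is clean precisely because $\Sources(G') \setminus S_1 \subseteq \Sinks(G)$ ensures the ``phantom'' arcs introduced by the augmentation connect only to sinks and are therefore absorbed by the ``$\cup\Sinks$'' in \CB{} and the ``$\setminus\Sinks$'' in \CA{}.
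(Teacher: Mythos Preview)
Your overall strategy—normalize so that non-sinks take priority over sinks, then extract the separator—is the paper's, but you package it inductively (peel off $S_1$, recurse on an augmented residual graph $\widetilde{G}$), whereas the paper performs a single global normalization (the procedure \textsf{Resolve-Conflicts}) and then reads off \emph{all} the $S_i$'s at once as exactly the timeslots with fewer than $m$ non-sinks. The paper's route is shorter: it avoids the augmented graph and all the bookkeeping of translating \CA--\CC{} between $\widetilde{G}$ and $G$.

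There is a genuine gap in your verification of \CA{} at $i=0$. From ``no predecessor of $v$ lies in $\Succ[S_1]$'' you conclude ``so all predecessors of $v$ lie in $S_0\cup V(\sigma_0)\subseteq T_{<t^*}$''. That does not follow: a predecessor $u$ of $v$ could be scheduled at some $T_{t'}$ with $t'>t^*$ and still avoid $\Succ[S_1]$; nothing you have said excludes this. What is missing is an earliest-counterexample argument: take $v$ to be the \emph{earliest} non-sink in $\Succ(S_0)\setminus\Succ[S_1]$ scheduled at or after $t^*$; each predecessor of $v$ is then either a source or itself a non-sink in $\Succ(S_0)\setminus\Succ[S_1]$ scheduled strictly before $v$, hence in $T_{<t^*}$ by minimality of $v$; only now does the non-sink-priority property force $v\in T_{\le t^*}$, giving the contradiction. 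This is precisely the step the paper makes explicit when proving \CA. A smaller wrinkle in your recursion: the claim $\Sinks(\widetilde{G}) = \Sinks(G)\cap V(G')$ is not literally correct, since the phantom arcs emanate from \emph{every} job in $S_1$, so a sink of $G$ lying in $S_1$ loses its sink status in $\widetilde{G}$. This turns out to be harmless for translating \CA--\CC{} back to $G$ (because $S_1$ is disjoint from every $\sigma_i$ with $i\ge 1$ and from $\Succ_G(S_i)$ for $i\ge 1$), but you should argue that rather than assert an equality that fails.
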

\begin{proof}
Consider a feasible schedule $\sigma = (T_1,\ldots,T_\makespan)$ of
minimum makespan. Since $|S_0|\leq m$, we may assume $T_1 = S_0$ because
$T_1$ is the first timeslot, and it can only contain sources. We introduce
the definition of a \emph{conflict}. This definition is designed so that a
non-sink can be moved to an earlier timeslot in the schedule if there is a
conflict.
 \begin{definition}[Conflict]
	We say that a non-sink $v \in V(G) \setminus \Sinks(G)$ and a timeslot $T_i$ are \emph{in
	conflict} if \emph{(a)} $v \in T_j$ with $j > i$, \emph{(b)} $\Pred(v) \subseteq T_1 \cup \ldots \cup T_{i-1}$, and \emph{(c)} $|T_i \setminus \Sinks(G)| < m$.
\end{definition}

\paragraph*{Procedure \textsf{Resolve-Conflicts}.}
The iterative procedure \textsf{Resolve-Conflicts} modifies $\sigma$ as follows:
while there exists some timeslot $T_i$ and job $v$ that are in conflict, we
move $v$ to $T_i$. If $|T_i| < m$, then there is an unused slot in $T_i$ and we
can freely move $v$ to $T_i$. Otherwise, $T_i$ contains some sink $s \in T_i
\cap \Sinks(G)$. In that case, we swap $s$ and $v$ in $\sigma$, i.e., we move
$v$ to timeslot $T_i$ and $s$ to timeslot $T_j$.

Every time \textsf{Resolve-Conflicts} finds a conflict, it moves a non-sink to
an earlier timeslot, and only sinks are moved to \emph{later} timeslots. Since
there are at most $n$ non-sinks and $\makespan$ timeslots,
\textsf{Resolve-Conflicts} must therefore halt after at most $n \makespan$
iterations. Note that \textsf{Resolve-Conflicts} only modifies timeslots of the
original schedule and does not add any new ones to $\sigma$. Therefore, the
makespan does not increase.

\paragraph*{Feasibility.} Consider a schedule $\sigma =
(T_1,\ldots,T_{\makespan})$ at some iteration and schedule $\sigma'$ after
resolving the conflicting $v$ and $T_i$.  By the construction, each timeslot in
$\sigma'$ has size $\le m$.  Therefore it remains to show that precedence
constraints are preserved.  Only job $v$ and possibly a sink $s$ change their
positions. Hence we only need to check their precedence constraints. 
  
Job $s$ is a sink in the timeslot $T_i$ of $\sigma$, hence the predecessors of $s$ are
scheduled before timeslot $i$. Since $s$ is a sink and has no successors, moving
it to a later timeslot preserves its precedence constraints.
For job $v$, it satisfies property (b), which means that all its predecessors
are scheduled strictly before timeslot $T_i$. Therefore, scheduling $v$ at
timeslot $T_i$ does not violate its precedence constraints. Additionally, the
successors of job $v$ are still valid because $v$ is moved to an earlier
timeslot. Thus, the modified schedule $\sigma'$ is feasible.

\paragraph*{\textsf{Resolve-Conflicts} returns a schedule with a proper separator.}
Note that after the modification we still have $T_1 = S_0$: The first timeslot
is never in conflict with any job because of property (b), and $T_1$ initially
contains all sources. We construct the proper separator $S_1,\ldots,S_\ell$ for $\sigma$
as follows: Let $t$ be the smallest integer such that after time $t$, only jobs
from $\Sinks(G)$ are scheduled. We choose $S_1,\ldots,S_{\ell-1}$ to be all
timeslots (in order) in $\sigma$ before $t$ with $|S_i \setminus \Sinks(G)| <
m$. Finally, we let $S_\ell$ be the set of jobs scheduled at $t$.
Observe that this way we can decompose $\sigma$ to be
\begin{displaymath}
   \sigma = \left( S_0 \oplus \sigma_0 \oplus S_1 \oplus \sigma_1 \oplus S_2 \oplus \ldots
		\oplus S_\ell \oplus \sigma_{\ell} \right)
		,
\end{displaymath}
such that $V(\sigma_j) \subseteq V(G) \setminus \Sinks(G)$ for all
$j \in \{0,\ldots,\ell-1\}$ and $V(\sigma_\ell) \subseteq \Sinks(G)$. By
construction $V(\sigma_\ell) \subseteq \Sinks(G)$ and \CC is
satisfied.

It remains to establish \CA and \CB. Here, we crucially use the assumption that
schedule $\sigma$ has no conflicts. In particular, timeslots
$S_1,\ldots,S_\ell$ are not in conflict with any job $v \in V(G) \setminus
\Sinks(G)$.  

For property \CA, observe that by the construction
$V(\sigma_i) \cap \Sinks(G) = \emptyset$ for every $i \in \{0,\ldots,\ell-1\}$.
So we only need to argue about non-sinks. 

First we show $v \in V(\sigma_i)$ implies $v \in \Succ(S_i)\setminus (\Succ[S_{i+1}] \cup \Sinks(G))$.
Clearly, $v\not \in \Succ[S_{i+1}]$ as $\sigma$ is feasible and $v$ is processed before $S_{i+1}$. Moreover, $v \in \Succ(S_i)$, otherwise take $v$ as the earliest job of $\sigma_i$ not in $\Succ(S_i)$, then $S_i$ and $v$ are in conflict,  
contradicting the assumption that $\sigma$
has no conflicts.

For the other direction, we show that $v \not \in V(\sigma_i)$ implies $v \not
\in  \Succ(S_i)\setminus (\Succ[S_{i+1}] \cup \Sinks(G))$. If a job $v$ is
before $\sigma_i$ in schedule $\sigma$, then $v$ cannot be in $\Succ(S_i)$ as $\sigma$ is
feasible. 
Finally, if there is a non-sink scheduled after $\sigma_i$ that is additionally not in
$\Succ[S_{i+1}]$, then take $v$ as the earliest such job. Now, $S_{i+1}$ and $v$ are in conflict,  
contradicting the assumption that $\sigma$
has no conflicts.

It remains to establish property \CB. For the sake of contradiction, take job $v
\in S_j \setminus \Sinks(G)$ such that $v \not \in \Succ(S_i)$ for some $i < j$.
If all $\Pred(v)$ are scheduled before $S_i$ then $v$ and $S_i$ are in conflict.
Therefore, there exists $v' \in \Pred(v)$ that is scheduled after $S_i$.  Take
$v'$ to be the earliest such predecessor of $v$. But then all predecessors of
$v'$ must be scheduled before $S_i$. This means that $v'$ and $S_i$ are in
conflict, contradicting that $\sigma$ has no conflicts.

Concluding, \textsf{Resolve-Conflicts} turns any schedule into a  schedule with a
proper separator without increasing the makespan. This concludes the proof. 
\end{proof}

\subsection{Reconstructing the Schedule}\label{sec:dynamic-programming}

\newcommand{\CAnew}{\textnormal{\textbf{(A')}}\xspace}
\newcommand{\CBnew}{\textnormal{\textbf{(B')}}\xspace}
\newcommand{\CCnew}{\textnormal{\textbf{(C')}}\xspace}

In this subsection, we will show how to reconstruct an optimal schedule for a graph $G$ with at most $m$ sources $S_0$, given a set of so-called subschedules. 
Note that we can assume that an optimal schedule $\sigma^\star$ for $G$ exists
that admits a proper separator, by Theorem~\ref{thm:correctness}. Hence,
$\sigma^\star$ can be written as
\begin{displaymath}
      \sigma^\star = \left(S_0 \oplus \sigma^\star_0 \oplus S^\star_1 \oplus
		  \sigma^\star_1  \oplus \ldots
        \oplus S^\star_\ell \oplus \sigma^\star_{\ell} \right),
\end{displaymath}
such that the three conditions in Definition~\ref{def:proper} hold.
To reconstruct the schedule we will use a dynamic programming table to construct such a schedule. 
For this purpose, we define a \emph{partial proper schedule} as a schedule for a subset of $V(G)$ of the form $(S_0 \oplus \sigma_0 \oplus S_1 \oplus \sigma_1 \oplus \ldots \oplus \sigma_{q-1}
        \oplus S_q)$
such that the first two conditions of Definition~\ref{def:proper} hold:
\begin{enumerate}
        \item[\CAnew] $V(\sigma_i) = \Succ(S_i) \setminus (\Succ[S_{i+1}] \cup \Sinks(G))$ for all $0 \le i < q$, 
        \item[\CBnew] $S_j \subseteq \Succ(S_i) \cup \Sinks(G)$ for all $0 \le i<j \leq q $.
\end{enumerate} 

In our algorithm, we will compute the minimum makespan for all possible $\sigma_i$'s.  Such schedules are formally defined as follows.
\begin{definition}[Subschedule]\label{def:subschedule}
	For a graph $G$ and slots $A,B \subseteq V(G)$ with $|A|,|B|\le m$ 
	we say that $\sigma[A,B]$ is a \emph{subschedule} if 
	it is an optimal schedule of jobs $V(\sigma[A,B])$ and 
	\begin{displaymath}
		V(\sigma[A,B]) = \Succ(A) \setminus (\Succ[B] \cup \Sinks(G))
		.
	\end{displaymath}
\end{definition}

Intuitively the set of jobs in a partial proper schedule is fully determined by the
set $S_q$ up to the jobs from $\Sinks(G)$. We prove this formally in the
following claim.

\begin{claim} \label{claim:proper}
    Let $\sigma = (S_0 \oplus \sigma_0 \oplus S_1 \oplus \sigma_1  \oplus \ldots \oplus \sigma_{q-1}
        \oplus S_q)$ be a partial proper schedule. Then 
        \[V(\sigma) \setminus \Sinks(G) = V(G) \setminus (\Succ(S_q)\cup \Sinks(G)).\]  
\end{claim}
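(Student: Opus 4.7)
The plan is to prove the claim by induction on $q$.

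For the base case $q=0$, the partial proper schedule is simply $\sigma=(S_0)$, so $V(\sigma)\setminus\Sinks(G)=S_0\setminus\Sinks(G)$. Since $S_0$ is assumed to contain all sources of $G$ and $G$ equals its own transitive closure, every non-source job is a successor of some source, giving $\Succ(S_0)=V(G)\setminus S_0$. Hence $V(G)\setminus(\Succ(S_0)\cup\Sinks(G))=S_0\setminus\Sinks(G)$, matching the left-hand side.

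For the inductive step, I would write $\sigma=\sigma'\oplus\sigma_{q-1}\oplus S_q$ where $\sigma'=(S_0\oplus\sigma_0\oplus\cdots\oplus\sigma_{q-2}\oplus S_{q-1})$ is itself a partial proper schedule, since \CAnew and \CBnew restrict cleanly to the shorter prefix. By the induction hypothesis $V(\sigma')\setminus\Sinks(G)=V(G)\setminus(\Succ(S_{q-1})\cup\Sinks(G))$, and \CAnew gives $V(\sigma_{q-1})=\Succ(S_{q-1})\setminus(\Succ[S_q]\cup\Sinks(G))$, so
\[V(\sigma)\setminus\Sinks(G)=\bigl(V(G)\setminus(\Succ(S_{q-1})\cup\Sinks(G))\bigr)\cup\bigl(\Succ(S_{q-1})\setminus(\Succ[S_q]\cup\Sinks(G))\bigr)\cup\bigl(S_q\setminus\Sinks(G)\bigr).\]
A short case analysis over $v\in V(G)\setminus\Sinks(G)$, according to whether $v$ lies in $\Succ(S_{q-1})$, in $\Succ(S_q)$, and in $S_q$, should then identify this union with $V(G)\setminus(\Succ(S_q)\cup\Sinks(G))$.

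The one step requiring genuine care is showing that every non-sink $v\in\Succ(S_q)$ is excluded from the first set $V(G)\setminus(\Succ(S_{q-1})\cup\Sinks(G))$, i.e.\ that $\Succ(S_q)\setminus\Sinks(G)\subseteq\Succ(S_{q-1})$. This is precisely where \CBnew is essential: given $v\in\Succ(S_q)$, pick $u\in S_q$ with $u\prec v$; then \CBnew forces $u\in\Succ(S_{q-1})\cup\Sinks(G)$, the existence of the successor $v$ rules out $u\in\Sinks(G)$, so $u\in\Succ(S_{q-1})$, and transitivity of $G$ places $v\in\Succ(S_{q-1})$. To exclude such $v$ from the third set $S_q\setminus\Sinks(G)$, I use that $S_q$ is an antichain (two comparable jobs cannot share a timeslot in a feasible schedule). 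The second set is automatically disjoint from $\Succ(S_q)$ since $\Succ(S_q)\subseteq\Succ[S_q]$.
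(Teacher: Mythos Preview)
Your proof is correct and follows essentially the same approach as the paper: expand via \CAnew, collapse the resulting union via \CBnew (together with the antichain property of each $S_i$), and finish using that $S_0$ is the full set of sources. The only difference is presentational---you argue by induction on $q$, whereas the paper writes the union $\bigcup_{i=0}^{q-1}\bigl(\Succ(S_i)\setminus\Succ(S_{i+1})\bigr)$ directly and telescopes it in one shot.
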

\begin{claimproof}
	By definition of $V(\sigma)\setminus \Sinks(G)$ is equal to $\left( 
	S_0 \cup \bigcup_{i=0}^{q-1} \left(S_{i+1} \cup V(\sigma_i) \right)\right) \setminus \Sinks(G)$.
	Then, we invoke the property \CAnew and get that left-hand side is equal to:
    \begin{align*}
        \left( S_0 \cup \bigcup_{i=0}^{q-1}\left( S_{i+1} \cup (\Succ(S_i)\setminus(\Succ[S_{i+1}] \cup \Sinks(G))) \right) \right) \setminus \Sinks(G) 
		.
	\end{align*}
	Note, that this can be simplified to $\left( S_0 \cup \bigcup_{i=0}^{q-1} \left( \Succ(S_i)\setminus\Succ(S_{i+1}) \right)\right) \setminus \Sinks(G)$. Next, we use property~\CBnew and the sum telescopes to $(S_0 \cup (\Succ(S_0) \setminus \Succ(S_q))) \setminus \Sinks(G)$ which is equal to $V(G) \setminus (\Succ(S_q) \cup \Sinks(G))$ because $S_0$ are all the sources.
\end{claimproof}

We are now ready to state and prove the correctness of our reconstruction algorithm.

\begin{lemma}\label{lem:fold}
	There is an algorithm {\normalfont\textsf{reconstruct}} that given a precedence graph $G$ with $n$ jobs and at most $m$
	sources, and the following set of subschedules $\{ \sigma[X,Y] \mid
	\text{ antichains }X,Y \subseteq V(G) \text{ with } X \subseteq \Succ(Y)$ and $|X|,|Y| \le m\}$,
	outputs the minimum makespan of a schedule for $G$ in $\binom{n}{\le 2m} \cdot n^{\Oh(1)}$ time.
\end{lemma}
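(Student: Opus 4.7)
The plan is to mimic the structure of a proper separator with a dynamic programming table. For every antichain $A \subseteq V(G) \setminus \Sinks(G)$ of size at most $m$ and every integer $k \in \{0, 1, \ldots, |\Sinks(G)|\}$, the entry $D[A, k]$ stores the minimum makespan of any partial proper schedule whose final separator $S_q$ satisfies $S_q \setminus \Sinks(G) = A$ and whose prefix contains exactly $k$ sinks. The reason such a compact state is sufficient is Claim~\ref{claim:proper}: once $A$ is fixed, the set of non-sink jobs already placed is determined, namely $V(G) \setminus (\Succ(A) \cup \Sinks(G))$, and since sinks have no successors, two prefixes scheduling the same $k$ ``ready'' sinks are freely interchangeable for the purpose of extending the schedule, so only the count $k$ matters.

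I would initialise $D[S_0 \setminus \Sinks(G),\, |S_0 \cap \Sinks(G)|] = 1$ for the single timeslot $S_0$, and relax, for every guess of the non-sink part $A'$ of the next separator (an antichain with $A' \subseteq \Succ(A)$ and $|A'| \le m$, as demanded by~\CBnew) and every number $s \in \{0, \ldots, m - |A'|\}$ of sinks to place at that separator,
\[
D[A',\, k + s] \;\leftarrow\; \min\bigl(D[A',\, k+s],\; D[A, k] + |\sigma[A, A']| + 1\bigr),
\]
where the length of the connector $\sigma[A, A']$ comes from the input table, and its job set $\Succ(A) \setminus (\Succ[A'] \cup \Sinks(G))$ matches~\CAnew verbatim. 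The final answer is $\min\bigl(D[A, k] + \lceil(|\Sinks(G)| - k)/m\rceil\bigr)$ over terminal states with $\Succ(A) \subseteq \Sinks(G)$; the trailing ceiling packs the remaining sinks into the all-sink tail $\sigma_\ell$ licensed by~\CC.

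Correctness is by two inclusions. Theorem~\ref{thm:correctness} produces an optimal schedule with separators $S_0, S_1, \ldots, S_\ell$; reading off the sequence of states $(S_i \setminus \Sinks(G),\, |(S_0 \cup \cdots \cup S_i) \cap \Sinks(G)|)$ witnesses the DP reaching at least its makespan. Conversely, any DP computation reassembles a genuine schedule from the looked-up subschedules interleaved with fresh separator-timeslots. The main subtlety to check, and what I view as the central obstacle, is that when the DP commits $k$ sinks by state $(A, k)$, there really are $k$ ready sinks available. This holds because the pool $L(A) := \{s \in \Sinks(G) : \Pred(s) \cap \Succ(A) = \emptyset\}$ is monotone non-decreasing along valid transitions: $A' \subseteq \Succ(A)$ forces $\Succ(A') \subseteq \Succ(A)$ by transitivity, so $L(A) \subseteq L(A')$ and sinks chosen at earlier stages remain legal as $A$ evolves.

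For the running time, each transition is parameterised by a disjoint pair of antichains $(A, A')$ (disjointness comes from $A' \subseteq \Succ(A)$ together with acyclicity of $G$) which jointly fit in $\binom{V(G)}{\le 2m}$, together with polynomially many choices for $k$ and $s$ and a polynomial-time makespan update, yielding the claimed $\binom{n}{\le 2m} \cdot n^{O(1)}$ bound.
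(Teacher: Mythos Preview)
Your overall approach is the same as the paper's: the same dynamic-programming table indexed by the non-sink part of the last separator and a sink counter, the same transition via the looked-up subschedule lengths, the same final extraction with a ceiling for the all-sink tail, and the same $\binom{n}{\le 2m}$ pair-counting for the running time. However, the soundness direction (``any DP computation reassembles a genuine schedule'') has a real gap.

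Monotonicity of $L(A)$ only says that a sink which was ready \emph{earlier} stays ready \emph{later}; it does not prevent the relaxation from committing more sinks than are ready at the moment they are placed. Concretely, take $m=2$ and the three-job chain $a\prec b\prec s$ with $s$ the unique sink. You initialise $D[\{a\},0]=1$ and then, with $A'=\{b\}$, compute $|\sigma[\{a\},\{b\}]|=0$ and relax $D[\{b\},1]\leftarrow 2$. Since $\Succ(\{b\})=\{s\}\subseteq\Sinks(G)$ this is terminal, and the final minimum picks up $D[\{b\},1]+\lceil 0/2\rceil=2$, whereas the optimal makespan is $3$. Your pool test even \emph{passes} here: $L(\{b\})=\{s\}$ because $\Pred(s)\cap\Succ(\{b\})=\{a,b\}\cap\{s\}=\emptyset$; but $b\in\Pred(s)$ sits in the very separator where you are trying to place $s$.

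Two fixes are needed, and they are exactly what the paper does. First, the readiness pool must use the \emph{closed} successor set: a sink that can legally share the separator with non-sinks $X$ must satisfy $\Pred(\cdot)\cap\Succ[X]=\emptyset$, not merely $\Pred(\cdot)\cap\Succ(X)=\emptyset$. Second, the DP must carry an explicit guard, setting $D[X,k]=\infty$ whenever fewer than $k$ sinks are ready in this sense; monotonicity then does the work you intended, ensuring that at each transition the $k-k'$ previously placed sinks lie inside the current ready set, so $k'$ fresh ready sinks remain to fill the new separator. Without that guard the table can, and in the example above does, underestimate the optimum.
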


\begin{proof}
Assume $n > 0$, as otherwise we output that the makespan is $0$. Let $S_0$ be the set of sources of $G$.
We define a set of possible separator slots:
\begin{displaymath}
	\Slots \coloneqq \{ X \subseteq V(G) \setminus \Sinks(G) \text{ such that } |X| \le m \text{ and } X \text{ is an antichain}\}.
\end{displaymath}
Note that $S_0 \setminus \Sinks(G), S_1\setminus \Sinks(G), \dots, S_\ell \setminus \Sinks(G)$ are all in $\Slots$.

\paragraph*{Definition of dynamic programming table.}
For any $k \in \nat$ and $X \in \Slots$ we let
\[
\begin{aligned}
\DP[X,k] \coloneqq \min\{& \text{ makespan of partial proper schedule }\sigma=  (S_0 \oplus \sigma_0 \oplus S_1 \oplus \sigma_1  \oplus \ldots \oplus \sigma_{q-1}
        \oplus S_q):\\ &|V(\sigma) \cap \Sinks(G)|=k,  X = S_q\setminus \Sinks(G) \}.
\end{aligned}
\]
In other words, $\DP[X,k]$ tells us the minimum makespan of a partial proper
schedule for $V(G)\setminus (\Succ(X)\cup \Sinks(G))$ (because of
\cref{claim:proper}) and $k$ jobs of $\Sinks(G)$, such that all jobs
from $X$ (and possibly some sinks) are processed at the last timeslot. At the
end, after each entry of $\DP[X,k]$ is computed, the minimum makespan of a
schedule for $G$ is then equal to 
\[
	\min_{\substack{X \in \Slots,\\\Succ(X)  \subseteq \Sinks(G),\\k \le |\Sinks(G)|}} \left\{\DP[X,|\Sinks(G)| - k] + \ceil{\frac{k}{m}}\right\}. \]
To see that this is correct, recall that $\sigma^\star_{\ell}$ only contains
jobs from $\Sinks(G)$, which can be processed in any order. Hence, we search
for a partial proper schedule processing all jobs, except for $k$ jobs from
$\Sinks(G)$ (hence $\Succ(X) \subseteq \Sinks(G)$ using
\cref{claim:proper}), and we add the time needed to process the $k$
remaining jobs from $\Sinks(G)$ in any order. 
\paragraph*{Base-case.}
In order to process $k$ jobs from $\Sinks(G)$, all of their predecessors must be completed.
In particular, we demand for each such sink that its predecessors are processed before the last timeslot. Hence, we set 
\begin{align*}
\DP[X,k] & = \infty \;\;\text{ if }|\{ v \in \Sinks(G) \mid \Pred(v) \subseteq V(G) \setminus\Succ[X] \}| < k. 
\intertext{Moreover, for any $X \in \Slots$ such that $X \cap S_0 \not = \emptyset$ we set:}
\DP[X,k] &= \begin{cases} 1 &\text{if } X= S_0 \setminus \Sinks(G) \text{ and } S_0\cap \Sinks(G) = k, \\ 
\infty &\text{otherwise.}\end{cases} 
\end{align*}
For the correctness of this case, note that $X$ contains sources and the number of
sources is assumed to be at most $m$. Therefore, in the optimal schedule, these
sources must be scheduled in the first moment, i.e., $\sigma = (S_0)$.
Hence, if $X \cap S_0 \neq \emptyset$ then it must hold that $X = S_0
\setminus \Sinks(G)$. We also need to check that
exactly $k$ sinks are scheduled, so $|S_0 \cap \Sinks(G)| = k$ must hold.

\paragraph*{Recursive formula.} To compute the remaining values of the $\DP$ table we use
the following recursive formula:
\begin{displaymath}
\DP[X,k] = \min_{\substack{Y \in \Slots,\\X  \subseteq \Succ(Y),\\k' \le m - |X|}} \{\DP[Y,k - k'] + |\sigma[Y,X]| + 1 \}.
\end{displaymath}
Observe that each $|\sigma[Y,X]|$ is given as the input to the algorithm. It
remains to prove that this recurrence is correct. The strategy is to show that
the left-hand side is $\le$ and $\ge$ to the right-hand side. We prove this by the
induction over $k$ and consecutive timeslots in the separator (i.e., we prove it
to be true for $A$ before $B$ if $B \subseteq \Succ(A)$).

\subparagraph*{Direction ($\ge$):} Let $\DP[X,k] = M$, i.e., there is a partial
proper schedule $\sigma = (S_0 \oplus\sigma_0 \oplus \ldots \oplus
\sigma_{q-1}\oplus S_q)$ of makespan $\makespan$ such that $|V(\sigma) \cap \Sinks(G)|=k$ and $X =
S_q\setminus \Sinks(G)$. Take $Y \coloneqq S_{q-1} \setminus \Sinks(G)$. By
condition~\CBnew, $S_q\subseteq \Succ (S_{q-1}) \cup \Sinks(G)$, and so $X =
S_q\setminus \Sinks(G)\subseteq \Succ(Y)$, so $Y$ is one of the sets considered
in the recursive formula. Take $k' \coloneqq |S_q \cap \Sinks(G)|$. We will show
that $M \ge \DP [Y, k-k'] + |\sigma[Y,X]| + 1$. Note that $\sigma' = (S_0 \oplus
\sigma_0 \oplus \ldots \oplus \sigma_{q-2} \oplus S_{q-1})$ is a partial proper
schedule, as conditions~\CAnew and~\CBnew still hold. Moreover, because of
condition~\CAnew, $\sigma_{q-1} \cap \Sinks(G) = \emptyset$, so $\sigma'$
contains $k-k'$ jobs from $\Sinks(G)$. Lastly, by definition, $Y = S_{q-1}
\setminus \Sinks(G)$, hence $\sigma'$ is a partial proper schedule proving that
$\DP[Y,k-k'] \leq |\sigma'|$. We know by condition~\CAnew that $V(\sigma_{q-1})
= \Succ(S_{q-1})\setminus (\Succ[S_q] \cup \Sinks(G)) = \Succ(Y)\setminus
(\Succ[X] \cup \Sinks(G))$ (as both $X$ and $Y$ do not contain any sinks), hence
$|\sigma_{q-1}| \geq |\sigma[Y,X]|$. We conclude that $M = |\sigma'| +
|\sigma_{q-1}| + 1 \geq \DP[Y, k-k'] + |\sigma[Y,X]| + 1$.

\subparagraph*{Direction ($\le$):} Let $M = \DP[Y, k-k'] + \sigma[Y,X] + 1$ for
some $Y \in \Slots$, $X \subseteq \Succ(Y)$, and $k' \leq m-|X|$. By induction, we
have a partial proper schedule $\sigma' = (S_0 \oplus \sigma_0 \oplus \ldots
\oplus \sigma_{q-2} \oplus S_{q-1})$ with a makespan $\DP[Y,k-k']$ for
$V(G)\setminus(\Succ(Y) \cup \Sinks(G))$ and $k-k'$ jobs from $\Sinks(G)$
such that $S_{q-1}\setminus \Sinks(G) = Y$. Let $Z \coloneqq V(\sigma')\cap
\Sinks(G)$ be the $k-k'$ sinks that are in $\sigma'$.  Then let $Z' \subseteq \{
v \in \Sinks(G) \mid \Pred(v) \subseteq V(G) \setminus\Succ[X] \} \setminus Z$ such
that $|Z'| = k'$.  We can choose $Z'$ in such a way because we set $\DP[X,k] =
\infty$ if $|\{ v \in \Sinks(G) \mid \Pred(v) \subseteq V(G) \setminus\Succ[X] \}| <
k$. 

We show that $\sigma = (\sigma' \oplus \sigma[Y,X] \oplus (X\cup Z))$ is a
partial proper schedule. First, note that the precedence constraints in $\sigma$
are satisfied: all jobs in $\sigma[Y,X]$ are successors of $S_{q-1}$; for
all sinks in $Z$, their processors are finished before the last timeslot by
definition, and none of the successors of $X$ are in
$\sigma$.
Condition~\CAnew holds for $\sigma'$, so it remains to check \CAnew for
$\sigma[Y,X]$: $V(\sigma[Y,X]) = \Succ(Y)\setminus (\Succ[X] \cup \Sinks(G))$ by
definition of subschedule, and so condition~\CAnew holds. Similarly, we only
need to verify condition~\CBnew for $S_q = (X \cup Z)$. By the choice of $Y$, $X
\subseteq \Succ(Y)$; hence, $S_q \subseteq \Succ(S_{q-1}) \cup \Sinks(G)$. As
$\sigma'$ is a partial proper schedule, we have that $S_{q-1} \subseteq
\Succ(S_{i}) \cup \Sinks(G)$ for all $i < q-1$. Therefore, $S_q \subseteq
\Succ(S_{i}) \cup \Sinks(G)$ for all $i< q$. So $\sigma$ is a partial proper schedule
with $|V(\sigma)\cap \Sinks(G)| =k$, $X= S_q \setminus \Sinks(G)$, and it has makespan $M$. Hence, we conclude $\DP[X,k] \leq M$.

\paragraph*{Running time.} We are left to analyze the running time of this algorithm.
Naively, there are at most $|\Slots|\cdot n$ table entries, and each table entry
$\DP[\cdot,\cdot]$ can be computed in $\Oh(|\Slots|\cdot m)$ time, in total giving a running time of $\binom{n}{\le m}^2 \cdot n^{\Oh(1)}$.
However, to compute a table entry $\DP[X,k]$, we only use table entries
$\DP[Y,k']$ where $X \subseteq \Succ(Y)$ and both $X$ and $Y$ are antichains.
Moreover, the combination of jobs from $X$ and $Y$ only appears once: given a
set $X \cup Y$, one can determine what is $X$ by taking all jobs that have
some predecessor in $X\cup Y$. Therefore, the running time can be bounded by
$\binom{n}{\leq 2m} \cdot n^{\Oh(1)}$ because $|X \cup Y| \le 2m$.
\end{proof}

\section{Subexponential Time Algorithm}\label{sec:subexponential}\label{sec:algorithm}

In this section, we prove Theorem~\ref{alg:main}.  Before we present the
algorithm, we introduce the notion of an \emph{interval} of graph $G$ that will
essentially correspond to the subproblems in our algorithm.

\paragraph*{Interval and new sinks.}

 For every two antichains $A,B \subseteq V(G)$ with $B \subseteq
\Succ(A)$ the intervals of $G$ are defined as
\[
\Int{A,B} \coloneqq G[\Succ(A) \cap
\Pred[B]], \qquad 
\Intc{A,B} \coloneqq G[\Succ[A] \cap
\Pred[B]].
\] 

It follows from the fact that $B$ is an antichain that $B = \Sinks(\Int{A,B})$.
Our algorithm schedules $\Int{A,B}$ recursively.\footnote{This definition eases
	the presentation since it allows us to break the symmetry and avoid
double-counting.} Therefore, at the beginning
of the algorithm, we create a super-source by adding a single job $s_0$ and
making it the source of every job in $V(G)$. Note that $G = \Int{{s_0},
\Sinks(G)}$, which serves as the starting point for our algorithm
(see~\cref{fig:interval} for an illustration of an interval). 
\begin{figure}[ht!]
    \centering
    \includegraphics[width=0.65\textwidth]{./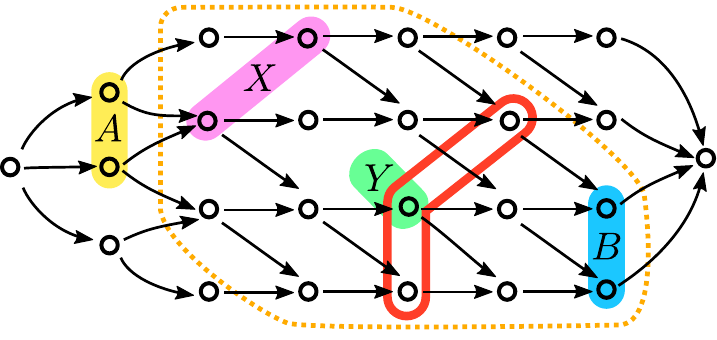}
	\caption{
	Yellow-highlighted jobs are in $A$, blue-highlighted jobs are in $B$, and
	the dotted region shows jobs in $\Int{A,B}$. For the definition of
$\NewSinks(X,Y)$, purple-highlighted jobs are in $X$ and a single
green-highlighted job is in $Y$. The three jobs of $\NewSinks(X,Y)$ are inside
the red border.}
    \label{fig:interval}
\end{figure}

Fix an interval $\Int{A,B}$. The set of jobs that
need to be scheduled is simply the set of jobs in the current interval,
i.e., $\Jobs = \Succ(A) \cap \Pred[B]$. We show how to decompose $\Int{A,B}$ into
subproblems. 
To that end, we consider all antichains $X,Y \subseteq \Jobs$ such that $Y \subseteq
\Succ(X)$ and $|Y|,|X| \le m$. For each such pair $X,Y$ and for $\Jobs$, we
define the sinks of the new subproblems as follows:
$$\NewSinks(X,Y) \coloneqq
\Sinks(G[\Jobs \cap \Succ(X) \setminus (\Succ(Y) \cup B)]).$$
See~\cref{fig:interval} for an illustration of the definition of $\NewSinks$. 

\paragraph*{Algorithm.}
We now present the algorithm behind~\cref{thm:mainthm}
(see~\cref{alg:main}). For any set $A,B \subseteq V(G)$, the subroutine
$\algname(A,B)$ computes the minimum makespan of the schedule of the graph
$\Int{A,B}$. We will maintain the invariant that $B \subseteq \Succ(A)$ in each
call to $\algname(A,B)$ so that $\Int{A,B}$ is correctly defined. At the
beginning of the subroutine, we check (using an additional look-up table)
whether the solution to $\algname(A,B)$ has already been computed. If so, we
return it. Next, we compute the set of jobs of $\Int{A,B}$ and store it as
$\Jobs \coloneqq \Succ(A) \cap \Pred[B]$. Then, we check the base-case, 
defined as $\Jobs = \emptyset$. In this case, the makespan is $0$ as there are no jobs to be scheduled.

In the remaining case, we have $\Jobs \neq \emptyset$. Then we iterate over
every possible pair of slots, i.e., antichains $X,Y \subseteq \Jobs$ with $|X|$
and $|Y|$ of size at most $m$ and $Y \subseteq \Succ(X)$.  
The hope is that $X$ and $Y$ are consecutive slots of a proper separator for the optimal schedule for $\Intc{A,B}$.
To compute the minimum makespan of the subschedule related to $X$ and $Y$, we ask for the minimum makespan for the
graph $\Int{X,\NewSinks(X,Y)}$ for every admissible $X,Y$ and collect all
answers.

After the answer to each of the subschedules is computed, we combine them
using~\cref{lem:fold} and return the minimum makespan of the schedule for the
graph $\Intc{A,B}$. Finally, we subtract $1$ from the makespan returned
by~\cref{lem:fold} to account for the fact that $\Int{A,B}$ is the graph
$\Intc{A,B}$ without sources $A$ (and these sources are always scheduled at the first
moment). 

\begin{algorithm}[ht!]
    \DontPrintSemicolon
    \textbf{function} $\algname(A,B)$:\\
    \Return answer if $\algname(A,B)$ was already computed \tcp*{Check lookup-table}\label{line:lookup}
    $\Jobs \coloneqq \Succ(A) \cap \Pred[B]$\\
	\lIf(\tcp*[f]{Base-case}){$\Jobs = \emptyset$}{\Return $0$}
    Let $\mathsf{Subschedules}$ be initially empty dictionary\\
	\ForEach{antichains $X,Y \subseteq \Jobs$ with $Y \subseteq \Succ(X)$ and $|X|,|Y|\le m$}{
        $\NewSinks(X,Y) \coloneqq \Sinks(G[\Jobs \cap \Succ(X) \setminus (B \cup \Succ(Y)))])$\label{line:newsinks}\\
        $\mathsf{makespan}(X,Y) \coloneqq \algname(X,\NewSinks(X,Y))$ \tcp*{Branching}\label{line:branching}
        \textbf{add} $\mathsf{makespan}(X,Y)$ \textbf{to} $\mathsf{Subschedules}$
    }
    \Return $\textsf{reconstruct}(\Intc{A,B}, \mathsf{Subschedules}) - 1$ \tcp*{by~\cref{lem:fold}}
    \caption{Function $\algname(A,B)$ returns the minimum makespan of a
        schedule of the interval $\Int{A,B}$ of graph $G$. The
    $\algname(\{s_0\},\Sinks(G))$ is the minimum makespan of the schedule of precedence graph $G$.}
    \label{alg:main}
\end{algorithm}

\paragraph*{Correctness.} 

For correctness, observe that in each recursive call, we guarantee that $B
\subseteq \Succ(A)$ and both $A$ and $B$ are antichains. The base-case consists
of no jobs, for which a makespan of $0$ is
optimal.

For the recursive call, observe that after the for-loop
in~\cref{alg:main}, \cref{lem:fold} is used. This statement guarantees that, in
the end, the optimal schedule is returned (note that we subtract $1$ from the
	schedule returned by~\cref{lem:fold} to account for the fact that $S_0$ is
not part of $\Int{A,B}$). Therefore, to finish the correctness
of~\cref{alg:main}, we need to prove that the conditions needed
by~\cref{lem:fold} are satisfied. Graph $\Intc{A,B}$ has at most $m$
sources in $A$. Hence, it remains to check that the for-loop at~\cref{alg:main}
collects all the subschedules.

In the for-loop we iterate over every possible antichains $X,Y \subseteq \Jobs$
with $|X|,|Y| \le m$ and $Y \subseteq \Succ(X)$. Note that it may happen that
$X$ or $Y$ is $\emptyset$, in which case the base-case is called. The input to
the subschedule is $V(\Int{X,\NewSinks(X,Y)})$, which by the following claim is
$V(\sigma[X,Y])$ in the graph $\Int{A,B}$. 

\begin{claim}\label{claim:new-sinks}
	Let $\Jobs = \Succ(A) \cap \Pred[B]$. If $X,Y \subseteq \Jobs$ are antichains and $Y \subseteq \Succ(X)$, then:
	\begin{displaymath}
		V(\Int{X,\NewSinks(X,Y)}) = \Jobs \cap \Succ(X) \setminus (\Succ(Y) \cup B).
	\end{displaymath}
\end{claim}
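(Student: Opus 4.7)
The plan is to unfold the definitions and reduce the claim to two straightforward set inclusions. Let me abbreviate $H \coloneqq G[\Jobs \cap \Succ(X) \setminus (\Succ(Y) \cup B)]$, so that $\NewSinks(X,Y) = \Sinks(H)$ by definition. Unpacking the definition of an interval, the left-hand side equals $\Succ(X) \cap \Pred[\Sinks(H)]$, while the right-hand side is exactly $V(H)$. Thus the goal becomes proving $V(H) = \Succ(X) \cap \Pred[\Sinks(H)]$.

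For the inclusion $V(H) \subseteq \Succ(X) \cap \Pred[\Sinks(H)]$ I would argue directly: every $v \in V(H)$ lies in $\Succ(X)$ by construction of $H$, and since $H$ is a finite DAG, $v$ is either a sink of $H$ or has a descendant in $H$ that is a sink, so $v \in \Pred[\Sinks(H)]$. This direction uses only that $H$ is finite and acyclic.

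For the reverse inclusion I would take $v \in \Succ(X) \cap \Pred[\Sinks(H)]$; the case $v \in \Sinks(H)$ is immediate, so assume $v \prec s$ for some $s \in \Sinks(H) \subseteq V(H)$. I then need to verify the three membership conditions defining $V(H)$: that $v \in \Jobs$, that $v \notin \Succ(Y)$, and that $v \notin B$. Membership in $\Jobs = \Succ(A) \cap \Pred[B]$ follows by transitivity: since $X \subseteq \Jobs \subseteq \Succ(A)$ and $v \in \Succ(X)$, we get $v \in \Succ(A)$; and since $s \in \Jobs \subseteq \Pred[B]$ together with $v \prec s$, we get $v \in \Pred[B]$. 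The condition $v \notin \Succ(Y)$ is equally easy: if $y \prec v$ for some $y \in Y$, then $y \prec s$ by transitivity, contradicting $s \notin \Succ(Y)$.

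The main obstacle — though still brief — is showing $v \notin B$, since $v$ could a priori be an element of $B$ that happens to be a strict predecessor of the sink $s$. The argument must combine two facts: $B$ is an antichain (given by hypothesis), and $s \in V(H)$ enforces $s \notin B$. From $s \in \Pred[B]$ together with $s \notin B$ one concludes that there exists $b \in B$ with $s \prec b$, and then $v \prec s \prec b$ contradicts the antichain property of $B$ unless $v \notin B$. Once this delicate step is in place, the two inclusions combine to give the claimed equality.
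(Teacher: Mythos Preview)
Your proof is correct and follows essentially the same route as the paper's: both unfold the definition of the interval to reduce the claim to showing $V(H) = \Succ(X) \cap \Pred[\Sinks(H)]$, and both verify the three conditions $v \in \Jobs$, $v \notin \Succ(Y)$, $v \notin B$ in the nontrivial direction. The only cosmetic difference is in the argument for $v \notin B$: the paper invokes the fact that $B = \Sinks(G[\Jobs])$ (so $v \in B$ and $v \preceq u \in \Jobs$ forces $v = u \in B$, contradicting $u \in \RHS$), whereas you argue directly from $B$ being an antichain together with $s \in \Pred[B] \setminus B$ --- these are equivalent, since $B = \Sinks(G[\Jobs])$ is itself a consequence of $B$ being an antichain with $\Jobs \subseteq \Pred[B]$.
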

\begin{claimproof}
Let the right-hand side be $\RHS \coloneqq \Jobs \cap \Succ(X) \setminus (\Succ(Y) \cup B)$. We need to show that $V(\Int{X,\Sinks(G[\RHS])}) = \RHS$. First, note that
$$V(\Int{X,\Sinks(G[\RHS])}) = \Succ(X) \cap \Pred[\Sinks(G[\RHS])] = \Succ(X) \cap  \Pred[\RHS].$$
We show that $v \in \RHS \Leftrightarrow v \in \Succ(X) \cap \Pred[\RHS]$. If $v
\in \RHS$, then $v \in \Succ(X)$ by definition. Moreover, if $v \in \RHS$, then
$v \in \Pred[\RHS]$. Therefore, $v \in \RHS$ implies $v \in \Succ(X) \cap
\Pred[\RHS]$.

For the other direction, let $v \in \Succ(X) \cap \Pred[\RHS]$ and let $u \in
\RHS$ be any job such that $v \preceq u$. By definition, $v \in \Succ(X)$, so
to show $v \in \RHS$, we are left to prove (i) $v \in \Jobs$, (ii) $v \not \in
\Succ(Y)$, and (iii) $v \not \in B$.
For (i), note that $X \subseteq \Jobs$, so $v \in \Succ(X) \Rightarrow v \in
\Succ(A)$. Furthermore, $u \in \Pred[B]$ and $v \preceq u$ imply that $v \in
\Pred[B]$. Together this implies $v \in \Jobs$.
For (ii), assume not, so $v \in \Succ(Y)$. Then there exists $w \in Y$ such that
$w \prec v$. As $v \preceq u$, this implies $u \in \Succ(Y)$, which contradicts
that $u \in \RHS$.
For (iii), assume not, so $v \in B$. As $B = \Sinks(\Jobs)$, we have that $v
\preceq u$ implies $v = u$, i.e., $u \in B$. This contradicts the fact that $u \in \RHS$.
\end{claimproof}

All these subschedules are collected and given to~\cref{lem:fold} with
graph $\Intc{A,B}$.
This concludes that indeed prerequisites of~\cref{lem:fold} are satisfied and
concludes the proof of correctness of~\cref{alg:main}. Hence, to finish the
proof of~\cref{thm:mainthm} it remains to analyze the running time complexity.

\subsection{Running Time Analysis}\label{sec:runtime}

In this subsection, we analyze the running time complexity of~\cref{alg:main}. Notice
that during the branching step, at least one job is removed from the subproblem,
as stated in~\cref{claim:new-sinks}. This means that the calls to
\cref{alg:main} form a branching tree and the algorithm terminates. For the
purpose of illustration, we first estimate the running time naively. Let us examine the number of possible parameters for the $\algname(A,B)$ subroutine. In each
recursive call, the set $A$ contains at most $m$ jobs, so the number of possible
$A$ is at most $\binom{n}{\leq m}$. However, the set $B$ may be of size $\Omega(n)$ as
it is determined by $\NewSinks$ in each call. Naively, the number
of possible states of $B$ is $2^n$, which is prohibitively expensive. We analyze the algorithm differently and demonstrate that~\cref{alg:main}
only deals with subexponentially many distinct states.

Consider the branching tree $\BranchTree$ determined by the recursive calls
of~\cref{alg:main}. In this tree, the vertices correspond to the calls to
$\algname(A,B)$, and we put an edge in $\BranchTree$ between vertices
representing recursive calls $\algname(A,B)$ and $\algname(X,Y)$ if
$\algname(A,B)$ calls $\algname(X,Y)$.
 
We call a node $\BranchTree$ a \emph{leaf} if it does not have any child, i.e.,
the corresponding call is either base-case or was computed earlier.
Because we check in Line~\ref{line:lookup} of~\cref{alg:main} if the answer
to the recursive call was already computed, $\algname$ can be invoked with parameters $(A,B)$ many times throughout the run
of the algorithm, but there can be at most one non-leaf vertex corresponding to this subproblem.

\begin{figure}[ht!]
    \centering
    \includegraphics[width=0.8\textwidth]{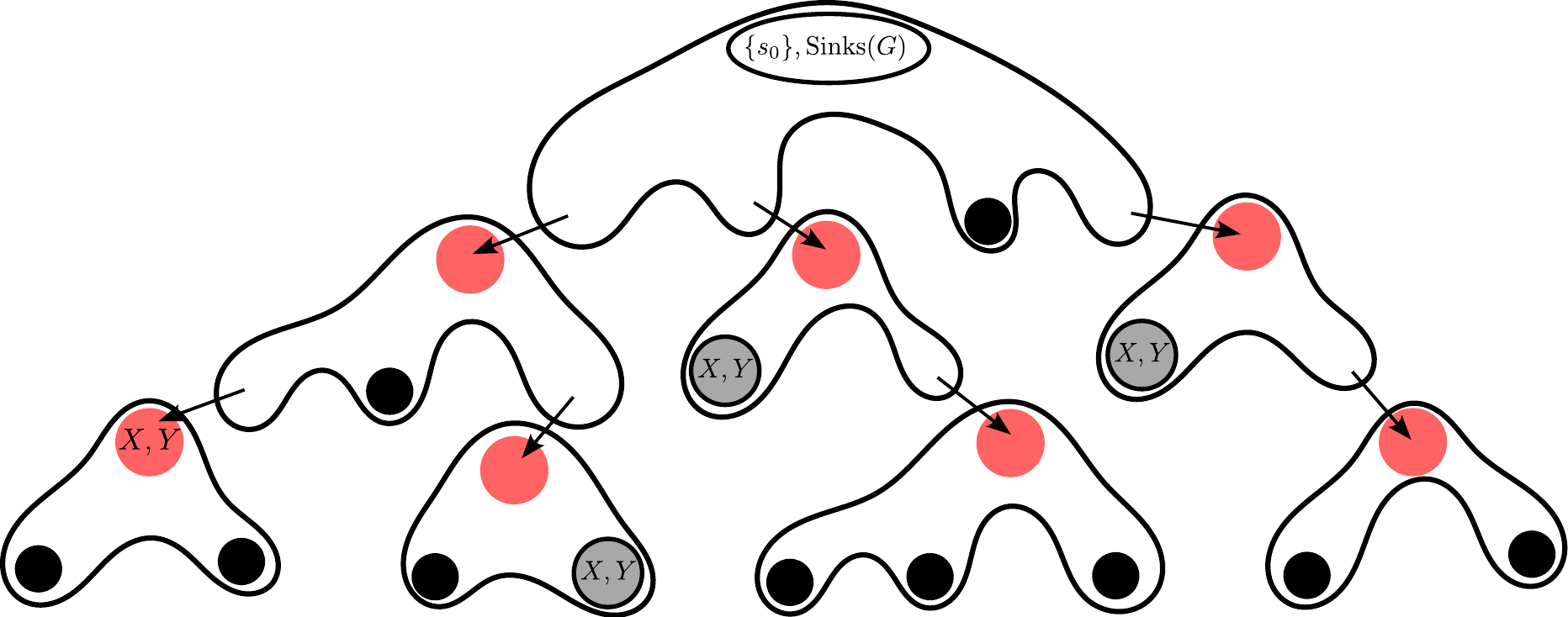}
	\caption{The figure shows the branching tree $\BranchTree$ explored by our
	algorithm. The root is a call to $\algname(\{s_0\},\Sinks(G))$. Black vertices
are leaves of $\BranchTree$, while red vertices are non-leaves corresponding to
calls $\algname(A,B)$ where $|B| \le \floor{\sqrt{nm}}$. We highlighted calls to
$\algname(X,Y)$ for some arbitrary $X,Y$. Using a lookup table,
only one call to $(X,Y)$ is red and other calls to $(X,Y)$ are leaves.
Decomposition $\Tt$ is created by deleting edges between red vertices and their parents.}
    \label{fig:branching-tree}
\end{figure}

Let $\lambda$ be a parameter (which we will set to $\lfloor\sqrt{nm}\rfloor$ to
prove Theorem~\ref{thm:mainthm}). Importantly, we highlight the non-leaf vertices of
$\BranchTree$ that correspond to calls $\algname(A,B)$ for some $A$ and $B$ with
$|B| \le \lambda$, and color them in \emph{red} (as shown
in~\cref{fig:branching-tree}). We define $\Tt \coloneqq \{\tau_1,\ldots,\tau_k\}$ to be
the set of subtrees that arise from deleting all edges between red vertices and
their parent in $\BranchTree$ (except for the root of $\BranchTree$ itself,
which does not have a parent). Let $\tau_{\text{root}} \in \Tt$ be the tree that shares the same root as
$\BranchTree$. Our first goal is to bound the number of trees in $\Tt$.

\begin{claim}\label{claim:number_t}
	$|\Tt| \le \binom{n}{\le(m + \lambda)} + 1$. 
\end{claim}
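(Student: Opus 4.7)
The plan is to observe that every subtree in $\Tt$ other than $\tau_{\text{root}}$ is rooted at a red non-leaf vertex, which by definition corresponds to a call $\algname(A,B)$ with $|A|\le m$ and $|B|\le\lambda$. Because of the lookup-table check in Line~\ref{line:lookup}, two distinct non-leaf vertices of $\BranchTree$ cannot correspond to the same input pair $(A,B)$. Hence bounding $|\Tt|-1$ reduces to counting the number of pairs $(A,B)$ of antichains with $|A|\le m$, $|B|\le \lambda$, and $B\subseteq \Succ(A)$ that can appear as parameters of a recursive call.

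The key step is to show that such a pair $(A,B)$ is uniquely determined by its union $A \cup B$, so that the number of pairs is at most the number of subsets of $V(G)$ of size $\le m+\lambda$, i.e.\ $\binom{n}{\le m+\lambda}$. For this I would argue that $A$ consists exactly of the minimal elements of $A\cup B$ in the order $\prec$. Indeed, every $b\in B$ has a predecessor in $A$ since $B\subseteq\Succ(A)$, so no element of $B$ is minimal in $A\cup B$. Conversely, suppose some $a'\in A$ had a predecessor $b\in A\cup B$; since $A$ is an antichain we must have $b\in B$, and then $B\subseteq\Succ(A)$ gives some $a\in A$ with $a\prec b\prec a'$, contradicting that $A$ is an antichain. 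Hence $A$ is precisely the set of minimal elements of $A\cup B$, which lets us recover both $A$ and $B$ from $A\cup B$.

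Putting this together, the map $(A,B)\mapsto A\cup B$ is injective on the set of admissible parameters of non-leaf recursive calls with $|B|\le\lambda$, so the number of red non-leaf vertices is at most $\binom{n}{\le m+\lambda}$. Each such vertex is the root of exactly one subtree in $\Tt\setminus\{\tau_{\text{root}}\}$, so $|\Tt|\le \binom{n}{\le m+\lambda}+1$, with the $+1$ accounting for $\tau_{\text{root}}$.

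I do not expect a serious obstacle here; the main thing to be careful about is stating the antichain/predecessor argument cleanly so that injectivity of $(A,B)\mapsto A\cup B$ is evident, and verifying that the invariant $B\subseteq\Succ(A)$ and the fact that both $A$ and $B$ are antichains are indeed maintained throughout the recursion (which follows from how $\NewSinks(X,Y)$ is defined in Line~\ref{line:newsinks} and from the outer for-loop restricting to antichain pairs).
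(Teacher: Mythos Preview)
Your proposal is correct and follows essentially the same approach as the paper's proof. The only cosmetic difference is that the paper phrases the injectivity by recovering $B$ (as the jobs in $A\cup B$ having some predecessor in $A\cup B$) whereas you recover $A$ (as the minimal elements of $A\cup B$); these are complementary formulations of the same observation, and your version spells out the antichain argument in slightly more detail than the paper does.
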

\begin{claimproof}
	Every tree in $\Tt \setminus \{\tau_{\text{root}}\}$ has a root that is a red
	non-leaf vertex, and for each pair of sets $A, B$ with $|A| \leq m$ and $|B|
	\leq \lambda$ there is at most one non-leaf corresponding to the recursive
	call $\algname(A,B)$. Moreover, the combination of jobs from $A$ and $B$ only appears once as $B \subseteq \Succ(A)$, so given a
set $A \cup B$, one can determine $B$ by taking all jobs that have
some predecessor in $A\cup B$. Hence, the total number of red non-leaves, and
	therefore the cardinality of $|\Tt|$, is at most $\binom{n}{\leq ( m + \lambda)} + 1$.
\end{claimproof}

Now we bound the size of each tree in $\Tt$.

\begin{claim}\label{claim:size_t}
	$|\tau_i| \le \binom{n}{\le 2m}^{\floor{n/\lambda}+1}$ for every $\tau_i \in \Tt$.
\end{claim}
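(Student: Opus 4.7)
The plan is to bound $|\tau_i|$ as the product of a branching-factor bound and a depth bound, via the standard geometric-sum estimate for trees.

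First, I establish the key structural property of $\tau_i$: every non-root vertex of $\tau_i$ is not red. This is immediate because the forest $\Tt$ is obtained from $\BranchTree$ by cutting exactly the edges whose child is red, so a non-root vertex of $\tau_i$ has its incoming edge intact and therefore cannot be red. Consequently, every non-root non-leaf vertex of $\tau_i$ represents a call $\algname(A, B)$ with $|B| > \lambda$.

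Next, I quantify how much the job count decreases at each recursive call. For a node $\algname(A, B)$, the job set $\Jobs = \Succ(A) \cap \Pred[B]$ contains $B$ (the sinks of $\Int{A, B}$). By~\cref{claim:new-sinks}, any child $\algname(X, \NewSinks(X, Y))$ has job set equal to $\Jobs \cap \Succ(X) \setminus (\Succ(Y) \cup B)$, which is disjoint from $B$. Hence each recursive step strictly drops the job count by at least $|B|$. Now consider any root-to-leaf path $v_0, v_1, \ldots, v_D$ in $\tau_i$: the vertices $v_1, \ldots, v_{D-1}$ are all non-root non-leaves of $\tau_i$, so $|B_{v_j}| \ge \lambda + 1$ for $1 \le j \le D-1$, giving a total drop of at least $(D-1)(\lambda+1)$ along these $D-1$ edges. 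Since the job count starts at most $n$ and stays non-negative, we conclude $D \le \floor{n/\lambda} + 1$.

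I then bound the branching factor by $\binom{n}{\le 2m}$. Each call branches over pairs $(X, Y)$ of antichains in $\Jobs$ with $|X|, |Y| \le m$ and $Y \subseteq \Succ(X)$, so $|X \cup Y| \le 2m$. As in the proof of~\cref{claim:number_t}, the pair $(X, Y)$ is uniquely determined by $X \cup Y$: since $X$ is an antichain and $Y \subseteq \Succ(X)$ has every element strictly above some element of $X$, the set $X$ is recovered as the elements of $X \cup Y$ with no predecessor inside $X \cup Y$. Hence the number of children is at most $\binom{n}{\le 2m}$.

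Combining these ingredients, a tree of branching factor $\le b := \binom{n}{\le 2m}$ and depth $\le D$ has at most $\sum_{d=0}^{D} b^d \le b^{D+1}$ vertices (valid for $b \ge 2$), so $|\tau_i| \le \binom{n}{\le 2m}^{\floor{n/\lambda}+1}$ up to a polynomial factor that is absorbed by one more factor of $b$. The main delicate point is that the root of $\tau_i$ may itself be red (whenever $\tau_i \ne \tau_{\text{root}}$), so the very first edge along a root-to-leaf path is not guaranteed to produce a drop of more than $\lambda$; however, since all subsequent edges do, this only affects the depth bound by an additive constant and the stated exponent in the claim still holds.
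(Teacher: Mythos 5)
Your proof is correct and follows essentially the same approach as the paper: bound the branching factor by $\binom{n}{\le 2m}$ (by recovering $X$ from $X \cup Y$), bound the root-to-leaf path length via the observation that every non-root non-leaf of $\tau_i$ has $|B| > \lambda$ while each recursive call removes at least $|B|$ jobs, and multiply. One small wording quibble in your final step: the geometric sum $\sum_{d=0}^{D} b^d$ is at most $\tfrac{b}{b-1}\,b^{D} \le 2\,b^{D}$ for $b \ge 2$, so the slack relative to the claimed $b^{D}$ is a universal constant $2$, not ``one more factor of $b$'' nor a ``polynomial factor'' (note $b = \binom{n}{\le 2m}$ is generally superpolynomial); the paper is equally casual about this constant, which is harmlessly absorbed in the $n^{\Oh(1)}$ of the final running-time bound.
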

\begin{claimproof}
	Observe that each vertex in $\tau_i$ has at most $\binom{n}{\le 2m}$ children.
	This holds because in Line~\ref{line:branching} of~\cref{alg:main},
	we iterate over two pairs of disjoint subsets $X$ and $Y$ of $\Jobs$ of size
	at most $m$ with $Y \subseteq \Succ(X)$. This is at most $\binom{n}{\le 2m}$ because
	first, we can guess $X \cup Y$, and then the partition into $X$ and $Y$ is
	determined by the $Y \subseteq \Succ(X)$ relation.

	Next, we show that the height (i.e., the number of edges on the longest path
	from the root to a leaf of the tree) of each tree $\tau_i$ is at most
	$\floor{n/\lambda}+1$. By definition, tree $\tau_i$ does not have any red
	vertices except for its root. In other words, all vertices except the root
	represent calls $\algname(A,B)$ for some $A$ and $|B| > \lambda$. According to
	\cref{claim:new-sinks}, we have:
    \begin{align*}
		V(\Int{X,\NewSinks(X,Y)}) & \subseteq V(\Int{A,B}) \text{, and }\\
		V(\Int{X,\NewSinks(X,Y)}) & \cap B = \emptyset
        .
    \end{align*}
    Note that by definition of the interval it holds that $B \subseteq V(\Int{A,B})$ because we assumed that $B \subseteq \Succ(A)$.
    Thus we have that $|V(\Int{X,\NewSinks(X,Y)})| \le |V(\Int{A,B})| - |B|$ for every recursive
	call. In other words, in each recursive call the size of the instance
	decreases by at least $|B|$. Because initially $|V(\Int{A,B})| \le n$ and $|B| > \lambda$ for each
	recursive call inside $\tau_i$ (except for the one that corresponds to the
	root of $\tau_i$), the total
    height of $\tau_i$ is thus at most $\floor{n/\lambda}+1$.
    Combining the observations on the degree and height of $\tau_i$ proves the claim.
\end{claimproof}

In total, the number of vertices in $\BranchTree$ is at most $|\Tt| \cdot
\max_{\tau_i \in \Tt} |\tau_i|$. Since one invocation of \cref{lem:fold} takes
$\binom{n}{\leq 2m} \cdot n^{\Oh(1)}$ time (see \cref{lem:fold}), each recursive
call can be performed in $\binom{n}{\leq 2m} \cdot n^{\Oh(1)}$ time. Thus, we
can conclude that \cref{alg:main} has a running time of:
\begin{align}\label{eq:exact-runtime} 
	|\BranchTree| \cdot \binom{n}{\le 2m} \cdot  n^{\Oh(1)} \le 
	\binom{n}{\le (\lambda + m)}
    \binom{n}{\le 2m}^{\floor{n/\lambda}+2}
	\cdot
	n^{\Oh(1)}
	.
\end{align}
To obtain the running time from Theorem~\ref{thm:mainthm}, we set $\lambda \coloneqq
\floor{\sqrt{nm}}$. Next, we use $\binom{n}{\le (\lambda+m)} \le \binom{n}{\le
\lambda} \cdot \binom{n}{\le m}$ and the inequality $\binom{n}{\le k} \le (ne/k)^k$:
\begin{displaymath}
	\left(\frac{ne}{\floor{\sqrt{nm}}}\right)^{\floor{\sqrt{nm}}} 
	\cdot
	\left(\frac{ne}{m}\right)^{m} 
	\cdot
	\left(\frac{ne}{2m}\right)^{\Oh(\sqrt{nm})} 
	\cdot
	n^{\Oh(1)} = \left(1+\frac{n}{m}\right)^{\Oh(\sqrt{nm})}
\end{displaymath}
as we can assume that $m \le n$. This shows that~\cref{alg:main} runs in 
$\left(1+n/m\right)^{\Oh(\sqrt{nm})}$ time and establishes~\cref{thm:mainthm}.

\section{$\Oh(1.997^n)$ Time Algorithm for Unbounded Number of Machines}
\label{sec:exact} 

In this section, we show how Theorem~\ref{thm:mainthm} can be used to
obtain a significantly faster algorithm for \sched, even when the number of
machines is unbounded. To accomplish this, we first demonstrate in
Subsection~\ref{subsec:subsetconvolution} how to obtain a fast algorithm when
the number of machines is large. Next, in Subsection~\ref{sec:expcombine}, we
show how it can be utilized to prove Theorem~\ref{thm:exactexponential}.

\subsection{An $\Os(2^{n-|\Sinks(G)|})$ Time Algorithm Using Fast Subset Convolution} \label{subsec:subsetconvolution}

Our algorithm crucially uses a technique that can be summarized in the following theorem:

\begin{theorem}[Fast Subset Convolution~\cite{bjorklund2007fourier}]
    \label{Thm:zetamob}
    Given functions $f,g: 2^U \rightarrow \{\mathsf{true, false}\}$. There is an algorithm that
    computes 
    \begin{displaymath}
        (f \circledast g)(S) \coloneqq \bigvee_{Z \subseteq S} f(Z) \wedge g(S \setminus Z)
    \end{displaymath} 
    for every $S \subseteq U$ in 
    $2^{|U|}\cdot |U|^{\Oh(1)}$
    time.
\end{theorem}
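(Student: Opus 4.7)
The plan is to follow the Björklund--Husfeldt--Kaski--Koivisto approach to subset convolution. First I would lift the problem from the Boolean semiring to the integers so that Möbius inversion becomes available: identify $\{\mathsf{true},\mathsf{false}\}$ with $\{1,0\}\subseteq\mathbb{Z}$, so that $f,g$ become $\{0,1\}$-valued integer functions on $2^U$, and compute
\[
H(S) \;\coloneqq\; \sum_{Z \subseteq S} f(Z)\cdot g(S\setminus Z).
\]
Then $(f \circledast g)(S)=\mathsf{true}$ if and only if $H(S)\ge 1$, so producing $H$ within the target budget suffices.

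Next I would stratify by cardinality. Because every pair $(Z,S\setminus Z)$ satisfies $|Z|+|S\setminus Z|=|S|$ exactly, I define truncations $f_k(S)\coloneqq f(S)\cdot[|S|=k]$ and $g_k(S)\coloneqq g(S)\cdot[|S|=k]$ for $k\in\{0,\dots,|U|\}$. Then
\[
H(S) \;=\; \sum_{i+j=|S|}\ \sum_{A\cup B=S} f_i(A)\,g_j(B) \;=\; \sum_{i+j=|S|} (f_i \boxplus g_j)(S),
\]
where $\boxplus$ denotes the \emph{cover product} $(p\boxplus q)(S)\coloneqq\sum_{A\cup B=S}p(A)q(B)$. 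The key observation is that, restricted to the rank slice $|A|=i$ and $|B|=j$ with $i+j=|S|$, the constraint $A\cup B=S$ forces $A$ and $B$ to be disjoint, so the cover product coincides with the disjoint subset convolution on that slice and no overcounting occurs.

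Now I would compute each cover product by the standard zeta/Möbius transform argument: writing $\hat p(S)\coloneqq \sum_{T\subseteq S} p(T)$, one has $\widehat{p\boxplus q}=\hat p\cdot \hat q$ pointwise, and Möbius inversion recovers $p\boxplus q$. Each zeta or Möbius transform runs in $\Oh(2^{|U|}\cdot|U|)$ time via Yates's coordinate-by-coordinate algorithm. Doing this for all $\Oh(|U|^2)$ ranked pairs $(i,j)$ and assembling $H(S)=\sum_{i+j=|S|}(f_i\boxplus g_j)(S)$ at every $S$ costs $2^{|U|}\cdot|U|^{\Oh(1)}$ in total; a final thresholding step reads off the Boolean output $[H(S)\ge 1]$.

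The step I expect to be the main obstacle, and the conceptual heart of the argument, is precisely the passage from the Boolean semiring to $\mathbb{Z}$. Over $(\vee,\wedge)$ there is no additive inverse, so a naive zeta-transform attack on $f\circledast g$ recovers only the cover product, which strictly overcounts by admitting pairs with $A\cap B\neq\emptyset$. Stratifying by cardinality is what decouples the disjoint and non-disjoint cases and is the one non-routine idea needed to reach the $2^{|U|}\cdot|U|^{\Oh(1)}$ bound.
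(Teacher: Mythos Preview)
The paper does not prove this theorem at all: it is quoted from~\cite{bjorklund2007fourier} as a black box, with only the one-line remark that the Boolean-codomain version follows easily from the usual ring-codomain version. Your sketch is precisely the standard Bj\"orklund--Husfeldt--Kaski--Koivisto argument (embed into $\mathbb{Z}$, rank the functions by cardinality, compute ranked cover products via zeta/M\"obius transforms, and use $|A|+|B|=|S|$ together with $A\cup B=S$ to force disjointness), and it is correct; so you have supplied exactly the proof the paper omits by citation.
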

Note that the above theorem is usually stated for functions $f,g$ with an arbitrary ring as co-domain, but is easy to see and well-known that the presented version also holds. We will use the algorithm implied by this theorem multiple times in our algorithm.

\begin{lemma}
    \label{thm:2n-m}
    \sched can be solved in $\Os(2^{n-|\Sinks(G)|})$ time. 
\end{lemma}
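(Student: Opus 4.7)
Let $N \coloneqq V(G) \setminus \Sinks(G)$, so $|N| = n - |\Sinks(G)|$. The plan is to design a dynamic program whose states are pairs $(C_N, k)$ with $C_N \subseteq N$ downward-closed in the subposet on $N$ and $k \in \{0, \ldots, |\Sinks(G)|\}$, and to perform each transition using one call to \cref{Thm:zetamob} on the universe $U = N$. Two observations drive the reduced state space: every predecessor of a non-sink is again a non-sink, so downward-closed subsets of $N$ are self-contained; and sinks have no successors and are thus freely interchangeable, so the DP only needs to track how many sinks have been scheduled, not which ones.

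For downward-closed $C_N \subseteq N$, let $R(C_N) \coloneqq \{s \in \Sinks(G) : \Pred(s) \subseteq C_N\}$ be the sinks made available once $C_N$ is processed. Define $\DP_t[C_N, k]$ to be true iff there is a feasible schedule of makespan at most $t$ processing precisely $C_N$ together with some $k$ sinks from $R(C_N)$. The base case is $\DP_0[\emptyset, 0] = \text{true}$, and the optimal makespan is the smallest $t$ for which $\DP_t[N, |\Sinks(G)|]$ holds. Peeling off the last timeslot as an antichain $A \subseteq N$ of maximal elements of $C_N$ together with $j$ newly placed sinks gives the recurrence
\[
  \DP_t[C_N, k] \;=\; \bigvee_{A,\, j}\, \DP_{t-1}[C_N \setminus A,\, k - j],
\]
ranging over $A$ antichains of maximal elements of $C_N$ with $|A| \leq m - j$, $0 \leq j \leq k$, and $k \leq |R(C_N \setminus A)|$. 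The last inequality is needed because both the previously scheduled $k - j$ sinks and the $j$ new ones at time $t$ must be $k$ distinct elements of $R(C_N \setminus A)$ (the predecessors of every sink in the schedule have to be placed by time $t-1$, hence inside $C_N \setminus A$).

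To evaluate the recurrence, for each triple $(t, k, j)$ I would invoke \cref{Thm:zetamob} on
\[
  f(Y) \coloneqq \DP_{t-1}[Y,\, k - j] \wedge [k \leq |R(Y)|], \qquad g(A) \coloneqq [A \text{ is an antichain in } N,\; |A| \leq m - j],
\]
which computes $(f \circledast g)(C_N)$ for all $C_N \subseteq N$ in $2^{|N|} \cdot \poly(n)$ time. The crucial point is that, restricting to downward-closed $C_N$, the convolution exactly realises the inner $A$-disjunction: for any decomposition $C_N = Y \sqcup A$ with $Y$ downward closed and $A$ antichain, each $v \in A$ satisfies $\Pred(v) \cap N \subseteq C_N$ (by downward-closedness of $C_N$) and $\Pred(v) \cap A = \emptyset$ (by the antichain property), forcing $\Pred(v) \cap N \subseteq Y$. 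Hence no element of $A$ has a successor in $Y$, which means $A$ consists of maximal elements of $C_N$. Therefore the joint precedence/antichain/max-element constraints factor into the two single-variable indicators $f$ and $g$, which is precisely what subset convolution requires.

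Taking the disjunction over $j \in \{0, \ldots, m\}$ gives $\DP_t[C_N, k]$, and iterating over $t, k \leq n$ fills the table in $\poly(n) \cdot 2^{|N|} = \Os(2^{n - |\Sinks(G)|})$ time, establishing the lemma. The main obstacle is the factorisation above: once the joint constraints are seen to separate into conditions on $Y$ and on $A$ alone, the rest of the proof (correctness of the recurrence, final answer extraction, and the running time count) consists of routine verifications.
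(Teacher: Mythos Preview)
Your proof is correct and follows essentially the same approach as the paper's: a dynamic program over downward-closed subsets of $N = V \setminus \Sinks(G)$ together with a sink count, with each layer-$t$ transition evaluated by subset convolution using exactly the same two indicator functions (sink-availability $[k \le |R(Y)|]$ and antichain-of-bounded-size on $A$). One minor slip worth fixing: in your factorization argument the ``Hence'' is a non sequitur (showing $\Pred(v)\cap N \subseteq Y$ says nothing about successors of $v$), but the intended conclusion---that no $v\in A$ has a successor in $Y$---follows immediately from $Y$ being downward closed, which you already assumed.
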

\begin{proof}
Let $G = (V,A)$ be the precedence graph given as input.
We will compute the following functions:
$f_{i,t},s_i,a_j : 2^U \rightarrow \{\mathsf{true, false}\}$
where $U\coloneqq V \setminus \Sinks(G)$.
For any $X \subseteq V \setminus \Sinks(G)$, index $i \in \{0,1,\dots, n\}$ and $t \in \{0,\dots,n\}$ let
\begin{align*}
f_{i,t}(X) &\coloneqq 
\begin{cases}
    \mathsf{true} &\text{if there exists a feasible schedule of makespan $t$ processing}\\
    & \text{the jobs in $X$ jointly with $i$ jobs from $\Sinks(G)$, and $X = \Pred[X]$,}\\
 \mathsf{false} & \text{otherwise.} \end{cases}
\intertext{Note that $X = \Pred[X]$ ensures that the schedule also includes all the predecessors of all the jobs. For any $X \subseteq V \setminus \Sinks(G)$ and $i \in \{0,1,\dots,n\}$ define}
    s_i(X) &\coloneqq \begin{cases} 
        \mathsf{true} & \text{ if } |\{ v \in \Sinks(G) \mid \Pred(v) \subseteq X \}| \ge i, \\
        \mathsf{false} & \text{ otherwise.}
    \end{cases}
\intertext{Intuitively, the value of $s_i(X)$ tells us whether the number of sinks that can be processed after processing $X$ is at least $i$.
For any $Y \subseteq V \setminus \Sinks(G)$ and $j \in \{0,1,\dots,m\}$ define}
    a_j(Y) &\coloneqq \begin{cases} 
        \mathsf{true} & \text{ if } |Y| \le m -j \text{ and } Y \text{ is an antichain}, \\
        \mathsf{false} & \text{ otherwise.}
    \end{cases}
\end{align*}
In essence, $a_j(Y)$ indicates whether $Y$ can be scheduled jointly with $j$
sinks that are not successors of $Y$ in a single timeslot.

It is important to note that the value $f_{|\Sinks(G)|,\makespan}(V \setminus
\Sinks(G))$ reveals whether all jobs can be completed within $\makespan$ time
units. Therefore, the smallest value of $M$ such that $f_{|\Sinks(G)|,\makespan}(V \setminus
\Sinks(G)) = \mathsf{true}$ is the optimal makespan we search for. 

We can
efficiently determine the base-case $f_{i,0}(X)$ for all $X \subseteq V
\setminus\Sinks(G)$ by setting $f_{i,0}(X)=\mathsf{true}$ if $X = \emptyset$ and
$i=0$, and $f_{i,0}(X)=\mathsf{false}$ otherwise.
For a fixed $Y \subseteq V \setminus \Sinks(G)$, $i \in \{0,\dots,n\}$, and $j
\in \{0,1,\dots,m\}$, we can easily compute the values of $a_j(Y)$ and $s_i(Y)$ in
polynomial time.

We calculate the values of $f_{i,t}(X)$ for every $t>0$, using the following recurrence relation.

\begin{claim}\label{lem:Zetacorr}
For all $X \subseteq V\setminus \Sinks(G)$, $i \in \{0,\dots,n\}$, $j \in \{0,\dots,m\}$ and $t \in \{1,\dots,n\}$:
\begin{equation}
    \label{eq:recurrence}
    f_{i,t}(X) = ({X = \Pred[X]}) \wedge \left( \bigvee_{j \in \{0,\dots,m\}}
        \bigvee_{Y \subseteq X}  f_{i - j, t-1}(X \setminus Y) \wedge s_i(X
    \setminus Y) \wedge a_j(Y)\right).
\end{equation}
\end{claim}
\begin{claimproof}
We split the claimed equivalence in two directions and prove them separately:

\textbf{($\Rightarrow$)}:
Assume that $f_{i,t}(X) = \mathsf{true}$. According to the definition, we have that $X = \Pred[X]$ and there
exists a feasible schedule $(T_1,\dots,T_t)$ that processes all jobs of $X$ and
$i$ sinks. Select $Y = X \cap T_t$ as the set of jobs from $X$ that were
processed at time $t$, and let $j = |T_t \cap \Sinks(G)|$ be the number of sinks
processed at time $t$. We prove that this choice ensures that the right
side of the equation is $\mathsf{true}$.

First, $f_{i-j,t-1}(X \setminus Y)$ is $\mathsf{true}$, since
$(T_1,\dots,T_{t-1})$ constitutes a feasible schedule for the jobs in $X
\setminus Y$ and $i-j$ sinks. Second, $s_i(X \setminus Y)$ is $\mathsf{true}$,
since all $i$ scheduled sinks (i.e., sinks in $\bigcup_{k=1}^t T_k$) must have all their
predecessors in $X$, which were completed by the time $t-1$, implying that all
predecessors of these jobs are in $X \setminus Y$. Additionally, $a_j(Y)$ is
$\mathsf{true}$, as $j+|Y|\le m$ and the jobs of $Y$ were all processed in the
same timeslot of a feasible schedule, thereby meaning that $Y$ is an
antichain. 

\textbf{($\Leftarrow$)}: If the right side of the equation is $\mathsf{true}$ it in particular means that there are $Y \subseteq X$, $j \in \{0,\dots,m\}$ such that $f_{i-j,t-1}(X \setminus Y)$, $s_i(X \setminus Y)$, and $a_j(Y)$ are all $\mathsf{true}$, and $X = \Pred[X]$.
    Take $(T_1,\dots,T_{t-1})$ as a feasible schedule for $X \setminus Y$ and
    $i-j$ sinks, which exists by definition of $f_{i-j,t-1}(X)$. Take $T_t = Y
    \cup B$, where $B$ is a subset of $\{ v \in \Sinks(G) \mid \Pred(v)
    \subseteq X \setminus Y\}$ of size $j$ such that none of the jobs of $B$ are
    in $(T_1,\dots,T_{t-1})$. Note that this set $B$ must exist as $s_i(X
    \setminus Y)$ is $\mathsf{true}$ and $(T_1,\dots,T_{t-1})$ contains $i-j$
    jobs of $\{ v \in \Sinks(G) \mid \Pred(v) \subseteq X \setminus Y\}$. 

We will prove that $(T_1,\dots,T_{t-1},T_t)$ is a feasible schedule for
processing $X$ and $i$ sinks. Note that this schedule processes all of $X
\setminus Y$ and $i-j$ sinks before time $t$, and $Y$ and $j$ sinks at time
$t$. Thus, we need to show that this schedule is feasible, which means that
no precedence constraints are violated. Since $(T_1,\dots,T_{t-1})$ is
feasible, we know that no precedence constraints within $X\setminus Y$ are
violated and that the jobs in $B\cup Y$ cannot be predecessors of a job in
$X \setminus Y$. Therefore, we only need to check whether $B\cup Y$ can
start processing at time $t$. By definition of the jobs in $B$, we know that
all of their predecessors are in $X\setminus Y$ and that these predecessors
finish by time $t-1$, so no precedence constraints related to them are
violated. Any predecessor of a job in $Y$ must be in $X \setminus Y$,
because $X = \Pred[X]$ and any predecessor must be in $X$, and since $Y$ is
an antichain, it cannot be in $Y$. Therefore, we conclude that we have found
a feasible schedule and that $f_{i,t}(X)$ is $\mathsf{true}$.
\end{claimproof}

Now we show how to evaluate~\cref{eq:recurrence} quickly with
Theorem~\ref{Thm:zetamob}. Instead of directly computing $f_{i,t}(X)$, we will
compute two functions as intermediate steps. First, we compute for all $i \in
\{0,\dots,n\}$, $j \in \{0,\dots,m\}$, $t \in \{0,\dots,n\}$, and $Z\subseteq U$:
\begin{align*}
    p_{i,j,t}(Z) &\coloneqq f_{i - j, t-1}(Z) \wedge s_i(Z).
\intertext{Note that once the value of $f_{i - j, t-1}(Z)$ is known, the value
of $p_{i,j,t}(Z)$ can be computed in polynomial time. Next, we compute  for all $i \in \{0,\dots,n\}$, $j \in \{0,\dots,m\}$, $t \in \{0,\dots,n\}$ and $X\subseteq U$:}
    q_{i,j,t}(X) &\coloneqq \bigvee_{Y \subseteq X} p_{i,j,t}(X \setminus Y) \wedge a_j(Y).
\intertext{
Once all values of $p_{i,t-1}(X)$ are known, the values of $q_{i,j,t}(X)$ for all $X
\subseteq U$ can be computed in $\Os(2^{n - |\Sinks(G)}|)$ time using
Theorem~\ref{Thm:zetamob}. Next, for every $X \subseteq U$ we determine the
value of $f_{i,t}(X)$ from $q_{i,j,t}(X)$ as follows:}
    f_{i,t}(X) &= (X = \Pred[X]) \wedge \bigvee_{j=0}^m q_{i,j,t}(X). 
\end{align*}
For every $X \subseteq V\setminus \Sinks(G)$, this transformation can be done in
polynomial time. Therefore, all values of $f_{i,t}$ can be computed in $2^{n-
|\Sinks(G)|} \cdot n^{\Oh(1)}$ time, assuming all values of $f_{i',t-1}$
for $i' \in \{0,1,\ldots,n\}$ are given. It follows that the minimum makespan of a feasible schedule for $G$ can be computed in $2^{n-
|\Sinks(G)|} \cdot n^{\Oh(1)}$ time.
This concludes the proof of Lemma~\ref{thm:2n-m}.
\end{proof}

\subsection{Proof of~\cref{thm:exactexponential}}
\label{sec:expcombine}
Now we combine the algorithm from the previous subsection with \cref{thm:mainthm} to prove \cref{thm:exactexponential}.

First, observe that if a precedence graph has at most $m$ sinks, then there
exists an optimal schedule that processes these sinks at the last timeslot.
Moreover, only sinks can be processed at the last timeslot. Therefore, in such
instances, we can safely remove all sinks and lower the target makespan by $1$
to get an equivalent instance.

It follows that we can assume that the number of sinks is at least $m$, and
hence the algorithm from \cref{thm:2n-m} runs in $\Os(2^{n-m})$ time.
Set $\alpha = (1-\log_2(1.9969)) \le 0.002238$. If $m \ge \alpha n$, the
algorithm from~\cref{thm:2n-m} runs in $\Oh(1.997^n)$ time. Therefore,
from now on, we assume that $m < \alpha n$.

We set $\lambda = 0.15 n$. Recall, that the \cref{thm:mainthm} runs in (see~\cref{eq:exact-runtime}):
\begin{displaymath}
    \binom{n}{\lambda + m}
    \binom{n}{2m}^{\floor{n/\lambda}+2}
	\cdot n^{\Oh(1)}
\end{displaymath}
because $\lambda+m \le n/4$. Next, we use the inequality
$\binom{n}{p n} \le 2^{h(p) n} \cdot n^{\Oh(1)}$ that holds for every $p \in (0,1)$, where $h(p)
\coloneqq -p \log_2(p) - (1-p) \log_2(1-p)$
is the binary entropy. Therefore the running time is
\begin{displaymath}
        2^{h(\lambda/n+\alpha)n} \cdot 
        2^{(\floor{n/\lambda}+2)\cdot h(2\alpha)n}
        \cdot
        n^{\Oh(1)}
    . 
\end{displaymath}
Now, we plug in the exact value for $\alpha$ and $\lambda$. This
gives $h(2\alpha) \le 0.042$, $h(\lambda/n+\alpha) \le
0.616$ and $\floor{n/\lambda} = 6$. Therefore, the running time is
$\Os(2^{0.952 n})$. This is faster than the $\Oh(1.997^n)$ algorithm that we get in the case when $m \le
\alpha n$. By and large, this yields an $\Oh(1.997^n)$ time algorithm for \sched
and proves~\cref{thm:exactexponential}.

\section{Conclusion and Further Research}
\label{sec:conc}

In this paper, our main results presents that 
\schedm can be solved in $(1+n/m)^{\Oh(\sqrt{nm})}$ time. 

We hope that our techniques have the potential to improve the running time even further. In particular, it would already be interesting to solve \schedthree in time $\Oh(2^{n^{0.499}})$. 
However, even when the precedence graph is a subgraph of some orientation of a grid, we do currently not know how to improve the $2^{\Oh(\sqrt{n}\log n )}$ time algorithm.

As mentioned in the introduction, there are some interesting similarities
between the research line of approximation initiated by~\cite{levey-rothvoss}
and our work: In these algorithms, the length of the highest chain $h$ also
plays a crucial role. However, while we would need to get  $h$ down to
$\Oh(\sqrt{n})$ in order to ensure that the $n^{\Oh(h)}$ time algorithm of Dolev and
Warmuth~\cite{dolev1984scheduling} runs fast enough, from the approximation
point of view $h \leq \varepsilon n$ is already sufficient for getting a
$(1+\varepsilon)$-approximation in polynomial time. Compared with the version from~\cite{DasW22}, both approaches also use a decomposition of some (approximately) optimal solution that eases the use of divide and conquer.
An intriguing difference, however, is that the approach from~\cite{DasW22} only
uses polynomial space, whereas the memorization part of our algorithm is
crucial.
It is an interesting question whether memorization along with some of our other methods can be used to get a PTAS for \schedthree.

Another remaining question is how to exclude $2^{o(n)}$ time algorithms for \sched
when $m = \Theta(n)$ assuming the Exponential Time Hypothesis. In~\cref{sec:LB},
we show a $2^{\Omega(n)}$ lower bound assuming the \textsc{Densest
$\kappa$-Subgraph} Hypothesis from~\cite{pasin}, and it seems plausible that a similar bound assuming the Exponential Time Hypothesis exists as well. Currently, the highest ETH
lower bound for \sched is $2^{\Omega(\sqrt{n \log n})}$ due to Jansen, Land, and
Kaluza~\cite{JansenLK16}.

\bibliographystyle{plain}
\bibliography{references}
\appendix
\section{Lower Bound} 
\label{sec:LB}

Lenstra and Rinnooy Kan~\cite{lenstra1978complexity} proved the
$\mathsf{NP}$-hardness of \sched by reducing from an instance of \textsc{Clique}
with $n$ vertices to an instance of \sched with $\Oh(n^2)$ jobs. Upon close
inspection, their reduction gives a $2^{\Omega(\sqrt{n})}$ lower bound (assuming
the Exponential Time Hypothesis). Jansen, Land, and Kaluza~\cite{JansenLK16}
improved this to $2^{\Omega(\sqrt{n \log n})}$. To the best of our knowledge,
this is currently the best lower bound based on the Exponential Time Hypothesis.
They also showed that a $2^{o(n)}$ time algorithm for \sched would imply a
$2^{o(n)}$ time algorithm for the Biclique problem on graphs with $n$ vertices.

We modify the reduction from~\cite{lenstra1978complexity} and start from an
instance of the \textsc{Densest $\kappa$-Subgraph} problem on sparse graphs. In
the \textsc{Densest $\kappa$-Subgraph} problem (\DKS), we are given a graph $G =
(V,E)$ and a positive integer $\kappa$. The goal is to select a subset $S
\subseteq V$ of $\kappa$ vertices that induce as many edges as possible. We use
$\den(G)$ to denote $\max_{S \subseteq V, |S| = \kappa} |E(S)|$, i.e., the
optimum of \DKS. Recently, Goel et al.~\cite{pasin} formulated the following
hypothesis about the hardness of \DKS.

\begin{hypothesis}[\cite{pasin}]
    \label{hyp:dks}
    There exists $\delta > 0$ and $\Delta \in \nat$ such that the following holds.
    Given an instance $(G,\kappa,\ell)$ of
    \DKS, where each one of $N$ vertices of graph $G$ has degree at most $\Delta$,
    no $\Oh(2^{\delta N})$ time algorithm can
    decide if $\den(G) \ge \ell$.
\end{hypothesis}

In fact Goel et al.~\cite{pasin} formulated a stronger hypothesis about a
hardness of approximation of \DKS. \cref{hyp:dks} is a special case of
\cite[Hypothesis 1]{pasin} with $C = 1$.  Now we exclude $2^{o(n)}$ time
algorithms for \sched assuming~\cref{hyp:dks}. To achieve this we modify
the $\mathsf{NP}$-hardness reduction of~\cite{lenstra1978complexity}.

\begin{theorem}
    \label{thm:lb-sched}
    Assuming~\cref{hyp:dks}, no algorithm can solve \sched in $2^{o(n)}$ time, even on instances with optimal makespan $3$.
\end{theorem}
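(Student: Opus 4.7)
The plan is to adapt the $\mathsf{NP}$-hardness reduction of Lenstra and Rinnooy Kan~\cite{lenstra1978complexity} from \textsc{Clique} to \sched, but to start from a \emph{sparse} \DKS instance so that the blow-up becomes linear rather than quadratic. Given an instance $(G,\kappa,\ell)$ of \DKS with $G=(V,E)$, $|V|=N$, $|E|=M$, and maximum degree bounded by $\Delta$ (so that $M \le \Delta N/2 = \Oh(N)$), I would construct in polynomial time a scheduling instance with $n = \Oh(N)$ jobs whose optimal makespan is $3$ iff $\den(G) \ge \ell$, and then invoke~\cref{hyp:dks}.

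The structure of the scheduling instance follows Lenstra-Rinnooy Kan closely: a source job $j_v$ for every $v \in V$, an edge job $j_e$ for every $e = \{u,v\} \in E$ with precedence constraints $j_u \prec j_e$ and $j_v \prec j_e$, and a collection of dummy source jobs to control the slot capacities. The key observation is that in any schedule of makespan $3$, no edge job can appear in the first slot (both predecessors must be completed earlier), so the edge jobs scheduled in slot~$2$ are exactly those edges of $G$ whose two endpoints are among the vertex jobs in slot~$1$. I would choose the number of machines $m$ and the number of dummies so that (i) $3m$ equals the total number of jobs, (ii) slot~$3$ has capacity for at most $M - \ell$ edge jobs (thereby forcing at least $\ell$ edges into slot~$2$), and (iii) slot~$1$ has just enough capacity that at most $\kappa$ real vertex jobs can be accommodated there after removing the padding. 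Combining (ii) and (iii) one then gets that a feasible makespan-$3$ schedule exists precisely when there is a choice of $\kappa$ vertices of $G$ inducing at least $\ell$ edges.

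The main obstacle is purely numerical: tuning $m$ and the dummy counts so that the inequalities on the number $a$ of real vertex jobs in slot~$1$ and the number $c$ of real edge jobs in slot~$2$ reduce to exactly $a \le \kappa$ and $c \ge \ell$. A natural starting point is $m = M - \ell$, which via $c \ge M - m = \ell$ already pins down the slot-$3$ side; the slot-$1$ and slot-$2$ counts are then balanced with two kinds of dummies, namely free ``fill'' sources with no precedence and, if needed, a linear-size padding gadget (e.g., extra length-$3$ source--edge--sink chains) that shifts the counts of ``vertex-like'' and ``edge-like'' slots until every inequality aligns regardless of the exact values of $N,M,\kappa,\ell$. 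Since only $\Oh(N)$ dummies and padding jobs are used, the resulting scheduling instance has $n = \Oh(N)$ jobs. A $2^{o(n)}$ time algorithm for \sched on this instance would then yield a $2^{o(N)}$ time algorithm for bounded-degree \DKS, contradicting \cref{hyp:dks} and establishing \cref{thm:lb-sched}.
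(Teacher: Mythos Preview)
Your plan is essentially the paper's approach: adapt the Lenstra--Rinnooy Kan reduction by starting from bounded-degree \DKS so that the blow-up is linear. The paper's concrete gadget is cleaner than your sketch: it sets $m = 2\Delta N + 1$ and uses three dummy layers $L_1,L_2,L_3$ of sizes $m-\kappa$, $m+\kappa-\ell-N$, $m+\ell-M$ with complete-bipartite precedence $L_1 \prec L_2 \prec L_3$, so the total is exactly $3m$ and each $L_i$ is forced into slot~$i$, after which the counting is immediate. Note that the ``free fill sources'' you propose cannot by themselves pin down the slot-$1$ count (a constraint-free source may sit in any slot), so the forcing really has to come from the length-$3$ chains you mention---and once you glue those chains together with full precedence between consecutive levels you have exactly the paper's layered construction.
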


\begin{proof}
    We reduce from an instance $(G,\kappa,\ell)$ of \DKS, as described in
    \cref{hyp:dks}. We assume that the graph $G$ does not contain isolated
    vertices. Note that if any isolated vertex is part of the optimum solution
    to \DKS, then the instance is trivial. We are promised that $G$ is an
    $N$-vertex graph with at most $M \le \Delta N$ edges, for some constant
    $\Delta \in \nat$.

    Based on $(G,\kappa,\ell)$, we construct an instance of \sched as follows.
    For each vertex $v \in V(G)$, create job $j_v^{(1)}$. For each edge $e =
    (u,v) \in E(G)$, create job $j_e^{(2)}$ with precedence constraints
    $j_u^{(1)} \prec j_e^{(2)}$ and $j_v^{(1)} \prec j_e^{(2)}$. Then, we set
    the number of machines $m \coloneqq 2 \Delta N + 1$ and create
    \emph{placeholder} jobs. Specifically, we create three layers of jobs: Layer
    $L_1$ consists of $m-\kappa$ jobs, layer $L_2$ consists of $m + \kappa -
    \ell - N$ jobs, and layer $L_3$ consists of $m + \ell - M$ jobs. Finally, we
    set all the jobs in $L_1$ to be predecessors of every job in $L_2$, and all
    jobs in $L_2$ to be predecessors of every job in $L_3$. This concludes the
    construction of the instance. At the end, we invoke an oracle for \sched and
    declare that $\den(G) \ge \ell$ if the makespan of the schedule is $3$.

    Now we argue that the constructed instance of \sched is equivalent to the
    original instance of \DKS.

    \textbf{($\Rightarrow$):} Assume that an answer to \DKS is true and there exists a
    set $S \subseteq V$ of $\kappa$ vertices that induce $\ge \ell$ edges. Then
    we can construct a schedule of makespan $3$ as follows. In the first
    timeslot, take jobs $j^{(1)}_v$ for all $v \in S$ and all jobs from layer
    $L_1$. In the second timeslot, take (i) jobs $j^{(1)}_u$ for all $v \in
    V\setminus S$, (ii) an arbitrary set of $\ell$ jobs $j^{(2)}_e$ where
    $e=(u,v)$ and $u,v \in S$, and (iii) all the jobs from $L_2$. In the third
    timeslot, take all the remaining jobs. Note that all precedence constraints
    are satisfied, and the sizes of $L_1, L_2$, and $L_3$ are selected such that
    all timeslots fit $\le m$ jobs.

    \textbf{($\Leftarrow$):} Assume that there exists a schedule with makespan $3$. Since
    the total number of jobs $n$ is $3m$, every timeslot must be full, i.e.,
    exactly $m$ jobs are scheduled in every timeslot. Observe that jobs from layers
    $L_1, L_2$, and $L_3$ must be processed consecutively in timeslots $1$, $2$,
    and $3$ because every triple in $L_1 \times L_2 \times L_3$ forms a chain
    with $3$ vertices. Next, let $S \subseteq V$ be the set of vertices such
    that jobs $j^{(1)}_s$ with $s \in S$ are processed in the first timeslot.
    Note that, other than jobs from $L_1$, only $\kappa$ jobs of the form
    $j^{(1)}_v$ for some $v \in V$ can be processed in the first timeslot (as
    these are the only remaining sources in the graph).
    Now, consider a second timeslot, which must be filled by exactly $m$ jobs.
    There are exactly $N - \kappa$ jobs of the form $j^{(1)}_v$ for $v \in V
    \setminus S$, and exactly $m - \ell - (N - \kappa)$ jobs in $L_2$.
    Therefore, $\ell$ jobs of the form $j^{(2)}_e$ for some $e \in E(G)$ must be
    scheduled in the second timeslot. These jobs correspond to the edges of $G$
    with both endpoints in $S$. Hence, $\den(G) \ge \ell$.

    This concludes the proof of equivalence between the instances. For the
    running time, note that the number of jobs $n$ in the constructed instance
    is $3m$, which is $\Oh(N)$ because $\Delta$ is a constant. Therefore, any
    algorithm that runs in $2^{o(n)}$ time and solves \sched contradicts
    \cref{hyp:dks}.
\end{proof}

\end{document}